\renewcommand*{\backref}[1]{}
\renewcommand*{\backrefalt}[4]{{\footnotesize [%
    \ifcase #1 Not cited.%
	\or Cited on Section~#2%
	\else Cited on Sections~#2%
	\fi%
]}}
\newtheorem{theorem}{Theorem}
\newtheorem{proposition}[theorem]{Proposition}
\newtheorem{lemma}[theorem]{Lemma}
\newtheorem{definition}[theorem]{Definition}
\newtheorem{example}[theorem]{Example}
\definecolor{codegreen}{rgb}{0,0.6,0}
\definecolor{codegray}{rgb}{0.5,0.5,0.5}
\definecolor{codepurple}{rgb}{0.58,0,0.82}
\definecolor{backcolour}{rgb}{0.95,0.95,0.92}
\newcommand{\listingsttfamily}{\fontfamily{IBMPlexMono-TLF}\scriptsize}
\lstdefinestyle{mystyle}{
    backgroundcolor=\color{backcolour},   
    commentstyle=\color{codegreen},
    keywordstyle=\color{magenta},
    numberstyle=\tiny\color{codegray},
    stringstyle=\color{codepurple},
    breakatwhitespace=false,         
    breaklines=true,                 
    captionpos=b,                    
    keepspaces=true,                 
    numbers=left,                    
    numbersep=5pt,                  
    showspaces=false,                
    showstringspaces=false,
    showtabs=false,                  
    tabsize=2,
	basicstyle=\listingsttfamily
}
\pgfplotsset{compat=1.18}
\newcommand{\squig}{{\scriptstyle\sim\mkern-3.9mu}}
\newcommand{\rsquigend}{{\scriptstyle\rule{.1ex}{0ex}\rhd}}
\newcounter{sqindex}
\newcommand\squigs[1]{%
  \setcounter{sqindex}{0}%
  \whiledo {\value{sqindex}< #1}{\addtocounter{sqindex}{1}\squig}%
}
\newcommand\rsquigarrow[2]{%
  \mathbin{\stackon[2pt]{\squigs{#2}\rsquigend}{\scriptscriptstyle\text{#1\,}}}%
}
\definecolor[named]{ACMBlue}{cmyk}{1,0.1,0,0.1}
\definecolor[named]{ACMYellow}{cmyk}{0,0.16,1,0}
\definecolor[named]{ACMOrange}{cmyk}{0,0.42,1,0.01}
\definecolor[named]{ACMRed}{cmyk}{0,0.90,0.86,0}   
\definecolor[named]{ACMLightRed}{cmyk}{0,0,0,0.35}
\definecolor[named]{ACMLightBlue}{cmyk}{0.49,0.01,0,0}
\definecolor[named]{ACMGreen}{cmyk}{0.20,0,1,0.19}
\definecolor[named]{ACMPurple}{cmyk}{0.55,0.6,0.1,0.15}
\definecolor[named]{ACMPurple2}{cmyk}{0.04,0.7,0.01,0.01}
\definecolor[named]{ACMPurple3}{cmyk}{0.04,0.7,0.01,0.01}
\definecolor[named]{ACMDarkBlue}{cmyk}{1,0.58,0,0.21}
\definecolor{redorange}{rgb}{0.878431, 0.235294, 0.192157}
\definecolor{lightblue}{rgb}{0.552941, 0.72549, 0.792157}
\definecolor{clearyellow}{rgb}{0.964706, 0.745098, 0}
\definecolor{midyellow}{rgb}{0.764706, 0.645098, 0.5}
\definecolor{clearorange}{rgb}{0.917647, 0.462745, 0}
\definecolor{mildgray}{rgb}{0.54902, 0.509804, 0.47451}
\definecolor{softblue}{rgb}{0.643137, 0.858824, 0.909804}
\definecolor{bluegray}{rgb}{0.141176, 0.313725, 0.603922}
\definecolor{lightgreen}{rgb}{0.709804, 0.741176, 0}
\definecolor{redpurple}{rgb}{0.835294, 0, 0.196078}
\definecolor{midblue}{rgb}{0, 0.592157, 0.662745}
\definecolor{clearpurple}{rgb}{0.67451, 0.0784314, 0.352941}
\definecolor{browngreen}{rgb}{0.333333, 0.313725, 0.145098}
\definecolor{darkestpurple}{rgb}{0.396078, 0.113725, 0.196078}
\definecolor{greypurple}{rgb}{0.294118, 0.219608, 0.298039}
\definecolor{darkturqoise}{rgb}{0, 0.239216, 0.298039}
\definecolor{darkbrown}{rgb}{0.305882, 0.211765, 0.160784}
\definecolor{midgreen}{rgb}{0.560784, 0.6, 0.243137}
\definecolor{darkred}{rgb}{0.576471, 0.152941, 0.172549}
\definecolor{darkpurple}{rgb}{0.313725, 0.027451, 0.470588}
\definecolor{darkestblue}{rgb}{0, 0.156863, 0.333333}
\definecolor{lightpurple}{rgb}{0.776471, 0.690196, 0.737255}
\definecolor{softgreen}{rgb}{0.733333, 0.772549, 0.572549}
\definecolor{offwhite}{rgb}{0.839216, 0.823529, 0.768627}
\definecolor{brightgreen}{rgb}{0.85, 0.98, 0.01}
\newcommand{\cola}{cyan} 
\newcommand{\colc}{red!70!white} 
\newtcolorbox{mybox}[1]{%
    colback=teal!10,
    coltitle=black,
    colframe=teal!30,
    fonttitle=\bfseries,
    title=#1, 
    sharp corners,
    boxrule=0pt,
    enhanced,
    overlay={\node[font=\Huge, text=cyan!70!black] at ([yshift=-4mm]interior.north west) {\ding{228}};}
    }
\DeclareMathAlphabet{\mathmybb}{U}{bbold}{m}{n}
\renewcommand{\O}{\mathcal{O}}
\title{{\bfseries\huge{Asymptotically Smaller Encodings\\ for Graph Problems and Scheduling}}} 
\author{Bernardo Subercaseaux~\orcidlink{0000-0001-2345-6789} (\url{bersub@cmu.edu})\thanks{This research is supported by the NSF under grant DMS-2434625.}}
\affil{Carnegie Mellon University, Pittsburgh PA}
\begin{document}

\maketitle

\begin{abstract}
    We show how several graph problems (e.g., vertex-cover, independent-set, $k$-coloring) can be encoded into CNF using only $\O(|V|^2 / \lg |V|)$ many clauses, as opposed to the $\Omega(|V|^2)$ constraints used by standard encodings. This somewhat surprising result is a simple consequence of a result of~\citet{chungDecompositionGraphsComplete1983} about biclique coverings of graphs, and opens theoretical avenues to understand the success of \emph{Bounded Variable Addition}~\citep{mantheyAutomatedReencodingBoolean2012} as a preprocessing tool. Finally, we show a novel encoding for independent sets in some dense interval graphs using only $\O(|V| \lg |V|)$ clauses (the direct encoding uses $\Omega(|V|^2)$), 
    which we have successfully applied to a string-compression encoding posed by~\citet{bannai_et_al:LIPIcs.ESA.2022.12}. As a direct byproduct, we obtain a reduction in the encoding size of a scheduling problem posed by~\citet{mayankEfficientSATEncoding2020} from $\O(N\!MT^2)$ to $\O(N\!MT + M T^2 \lg T)$, where $N$ is the number of tasks, $T$ the total timespan, and $M$ the number of machines.
\end{abstract}

\section{Introduction}
Using a more compact CNF encoding can make the entire difference between a combinatorial problem being solvable (even in many CPU years) and it being intractable~\citep{subercaseauxPackingChromaticNumber2023d,heuleHappyEndingEmpty2024c,wesleyLowerBoundsBook2024,heuleSATApproachCliqueWidth2015,schidlerSATbasedDecisionTree2024,qian2025unfoldingboxeslocalconstraints}.
However, besides a few very general principles~\citep{DBLP:journals/jsat/Bjork11,Prestwich2021}, it seems that the \emph{``art of encodings''} is still mostly explored through problem-specific ideas, and it is not clear how to systematically obtain smaller encodings for combinatorial problems. Furthermore, lower bounds on the size of encodings have been elusive, with very few exceptions on relatively simple constraints such as \emph{Parity}~\citep{emdinCNFEncodingsParity2022} or \emph{At-Most-One}~\citep{kuceraLowerBoundCNF2019}, making it difficult to predict whether the direct encoding for a given problem is already optimal or not. 

In this article, I will show that several standard graph problems can be encoded more efficiently than through their direct formulation, and more importantly, that the tools used can shed light into theoretical questions about encodings.

As a representative example, consider first the \emph{independent set} problem.
The input is a graph $G = (V, E)$, together with an integer $k$, and the goal is to find a subset of the vertices $S \subseteq V$ such that  $\binom{S}{2} \cap E = \varnothing$ (i.e., no two vertices in $S$ are neighbors) and $|S| = k$. The \emph{``direct encoding''} is thus to create, for each vertex $v \in V$, a variable $x_v$ representing whether $v \in S$. Then, the direct encoding consists of enforcing the independent-set property
\begin{equation}\label{eq:binaries}
    \bigwedge_{\{u, v\} \in E} (\overline{x_{u}} \lor \overline{x_{v}}), 
\end{equation}
and then the cardinality constraint
\(   
\sum_{v \in V} x_v = k.\)
While cardinality constraints are known to admit compact encodings with $\O(n)$ clauses~\citep{sinzOptimalCNFEncoding2005a}, and even arc-consistency with $\O(n \lg ^2 n)$ clauses~\citep{asinCardinalityNetworksTheoretical2011}, 
\Cref{eq:binaries} amounts to $\Theta(|E|)$ clauses, which is $\Omega(|V|^2)$ for dense graphs.

Our first contribution (in~\Cref{sec:biclique_encodings}) is to show that this encoding can be improved to $\O(|V|^2 / \lg |V|)$ clauses, and in consequence, that several structurally similar graph problems can be encoded compactly as well:

\begin{theorem}[Informal]\label{thm:main}
    The \emph{independent set}, \emph{vertex cover}, \emph{$k$-coloring}, and \emph{clique} problems can be encoded into CNF using $\O(|V|^2 / \lg |V|)$ many clauses.
\end{theorem}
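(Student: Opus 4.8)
The plan is to isolate the combinatorial core shared by all four problems---encoding a ``conflict constraint'' of the form $\bigwedge_{\{u,v\} \in E}(\overline{\ell_u} \lor \overline{\ell_v})$ over a graph $G = (V,E)$ and a family of literals $\{\ell_v\}_{v \in V}$---and then to recover the four problems as instantiations. The one nontrivial input will be \citet{chungDecompositionGraphsComplete1983}, which I would record in the following ``weighted cover'' form: every graph $G = (V,E)$ admits bicliques $H_1, \dots, H_t$, with $H_i$ joining disjoint sets $A_i, B_i \subseteq V$ such that $A_i \times B_i \subseteq E$, with the property that every edge of $E$ lies in some $H_i$ and $\sum_{i=1}^{t}(|A_i| + |B_i|) = \O(|V|^2 / \lg |V|)$. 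A cover, not a partition, is all that is needed.

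Given such a cover, the re-encoding step is the heart of the matter: for each $H_i$ introduce one fresh variable $a_i$, post the clauses $(\overline{\ell_u} \lor a_i)$ for each $u \in A_i$ and $(\overline{a_i} \lor \overline{\ell_v})$ for each $v \in B_i$, and nothing else---so $a_i$ upper-bounds $\bigvee_{u \in A_i}\ell_u$, after which no $\ell_v$ with $v \in B_i$ may be true. The clause count is $\sum_i(|A_i| + |B_i|) = \O(|V|^2/\lg|V|)$, and the number of fresh variables is $t \le \tfrac12\sum_i(|A_i|+|B_i|)$ because each biclique is nonempty. Correctness is the routine part: an assignment to the $\ell_v$ extends to a model of the new clauses if and only if it falsifies every original conflict clause. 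For the forward direction, if $\ell_u \land \ell_v$ holds for an edge $\{u,v\}$, choose $H_i$ containing it, say $u \in A_i$ and $v \in B_i$; then $a_i$ is forced true and $\ell_v$ forced false. For the converse, set $a_i$ true exactly when some $\ell_u$ with $u \in A_i$ is true, and observe that this respects all new clauses precisely because no biclique edge has both literals true.

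The four reductions then follow. Independent set is the conflict constraint with $\ell_v := x_v$ on $G$ itself; clique is the same construction on the complement graph $\overline{G}$; vertex cover is the conflict constraint on $G$ with $\ell_v := \overline{z_v}$, where $z_v$ indicates that $v$ is in the cover; and $k$-coloring is $k$ disjoint copies of the conflict constraint on $G$, the $c$-th copy using literals $\ell_v := x_{v,c}$, together with the at-least-one-color clauses $\bigvee_{c=1}^{k} x_{v,c}$. Each problem's cardinality requirement $\sum_v(\cdot) = k'$ is then appended using $\O(|V|)$ clauses via a standard sequential-counter encoding~\citep{sinzOptimalCNFEncoding2005a}. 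Summing up gives $\O(|V|^2/\lg|V|)$ clauses for independent set, clique, and vertex cover, and $\O(k\,|V|^2/\lg|V|)$ for $k$-coloring, which is $\O(|V|^2/\lg|V|)$ whenever $k$ is treated as a constant.

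I expect the main obstacle to be packaging rather than depth: stating Chung--Erd\H{o}s--Spencer in the weighted-cover form above, and verifying that introducing the variables $a_i$ leaves the solution set, projected to the original variables, unchanged. Everything quantitatively interesting is delegated to that black box; the rest is a uniform substitution of ``one clause per biclique'' for ``one clause per edge.''
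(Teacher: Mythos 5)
Your proposal is essentially the paper's own argument: a biclique covering with total vertex weight $\O(|V|^2/\lg|V|)$ from \citet{chungDecompositionGraphsComplete1983}, one fresh variable per biclique with exactly the clauses $(\overline{\ell_u}\lor a_i)$ and $(\overline{a_i}\lor\overline{\ell_v})$ (the paper's $\textsf{BIS}$ encoding), the same projection-correctness argument, and the same reductions for vertex cover (flip polarities), clique (complement graph), and $k$-coloring ($k$ copies). One wording slip: an assignment extends to a model of the new clauses iff it \emph{satisfies} (not falsifies) every original conflict clause, i.e., iff the selected set is independent---your two directions argue exactly this, so the content is right.
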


This result improves upon an idea of~\citet{Rintanen}, and then~\citet{ignatievCardinalityEncodingsGraph2017}, who used \emph{clique coverings} to encode the independent-set property by observing that, for any clique $K_t$ of $G$, at most one vertex of the clique can be part of $S$, which can be encoded using $\O(t)$ clauses as opposed to the $\Omega(t^2)$ clauses used by~\Cref{eq:binaries}. 
However, as the authors themselves note, this idea is not enough to obtain an encoding with $o(|V|^2)$ clauses in all graphs, since for example a complete bipartite graph has $\Omega(|V|^2)$ edges and yet not cliques of size larger than $2$.
We overcome this limitation by using \emph{biclique coverings} of graphs, leveraging the fact that any graph with $\Omega(|V|^2)$ edges must contain a biclique $K_{t, t}$ with $t = \Omega(\lg |V|)$~\citep{chungDecompositionGraphsComplete1983}.

Then, in~\Cref{sec:frameworks}, we compare more in detail the~\emph{biclique covering} framework with the~\emph{clique covering} framework of~\citet{ignatievCardinalityEncodingsGraph2017} as well as with \emph{Bounded Variable Addition} (BVA)~\citep{mantheyAutomatedReencodingBoolean2012}, a successful preprocessing technique for reducing encoding sizes.
As a cornerstone of this comparison, we study in~\Cref{sec:disjoint-intervals} how to encode that a selection of intervals $[i,  j]$ for $1 \leq i < j \leq n$ is \emph{pairwise disjoint}, i.e., that no two intervals overlap. This corresponds to encoding the independent-set property of a corresponding interval graph $\mathcal{I}_n$. We show that, despite the fact that $|E(\mathcal{I}_n)| = \Omega(n^4)$, it is possible to obtain a much more compact encoding.

\begin{theorem}[Informal]\label{thm:intervals}
    The independent-set property of the interval graph $\mathcal{I}_n$ can be encoded into CNF using $\O(n^2 \lg n)$ clauses.
\end{theorem}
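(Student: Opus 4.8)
The plan is to exploit a balanced segment tree over the $n-1$ unit cells of $[1,n]$ (cell $k$ standing for the gap between $k$ and $k+1$), under which an interval $[i,j]$ becomes the contiguous block of cells $\{i,\dots,j-1\}$, two intervals overlap exactly when their blocks intersect, and --- the point --- each block is partitioned by a canonical set $\mathrm{canon}(I)$ of only $\O(\lg n)$ tree nodes. The structural lemma I would establish first is that an interval $I$ \emph{covers} a node $v$ (i.e.\ $R_v\subseteq R_I$, where $R_v$ is $v$'s cell range) iff $\mathrm{canon}(I)$ contains $v$ or a strict ancestor of $v$; consequently the intervals covering a fixed node form a clique of $\mathcal{I}_n$, these cliques (over all nodes) cover $E(\mathcal{I}_n)$, and two intervals overlap iff along some root-to-leaf path both contribute a canonical piece. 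Hence a selection is an independent set iff \textbf{on every root-to-leaf path at most one selected interval has a canonical piece} --- and the goal becomes to encode \emph{this} condition in $\O(n^2\lg n)$ clauses, i.e.\ to encode the (hugely redundant, total weight $\Theta(n^3)$) clique cover above by using the nesting of its cliques along tree paths.

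For the encoding I would introduce one auxiliary variable $h_v$ per tree node, meant to say ``some selected interval covers $R_v$'', and use four clause families: (i) $\overline{x_I}\lor h_v$ for every $I$ and every $v\in\mathrm{canon}(I)$; (ii) the top-down propagation $\overline{h_{\mathrm{par}(v)}}\lor h_v$ for every non-root $v$; (iii) the conflict clauses $\overline{x_I}\lor\overline{h_{\mathrm{par}(v)}}$ for every non-root $v$ and every $I$ with $v\in\mathrm{canon}(I)$, which forbid one interval from covering $R_{\mathrm{par}(v)}$ while another drops a piece strictly below it at $v$; and (iv) a linear-size at-most-one constraint over $\{\,I : v\in\mathrm{canon}(I)\,\}$ at each node $v$, forbidding two selected intervals from sharing the very same canonical piece. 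Families (i) and (iii) contribute $\sum_I|\mathrm{canon}(I)|=\O(n^2\lg n)$ clauses each, (ii) contributes $\O(n)$, and the at-most-one gadgets contribute $\sum_v|\{I:v\in\mathrm{canon}(I)\}|=\sum_I|\mathrm{canon}(I)|=\O(n^2\lg n)$ in total; there are $\O(n)$ extra $h$-variables (plus the routine auxiliaries internal to the at-most-one gadgets), so the whole encoding has $\O(n^2\lg n)$ clauses.

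Correctness is an equisatisfiability check. Soundness: in any model, if selected intervals $I_1\ne I_2$ shared a cell $c$, let $u_1,u_2$ be the canonical pieces of $I_1,I_2$ whose range contains $c$; these lie on the root-to-leaf path through $c$, so either $u_1=u_2$, blocked by (iv), or one is a strict ancestor of the other, say $u_1$ above $u_2$, whence (i) and the propagation (ii) force $h_{\mathrm{par}(u_2)}$ true, contradicting (iii) for $I_2$ at $u_2$. Completeness: given a genuine independent set, set $h_v$ true iff some selected interval covers $R_v$; each of the four families then reduces to the statement ``two distinct selected intervals both cover a common region, hence share a cell'', which is false by assumption.

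The step I expect to be the crux is conceptual rather than computational: one must avoid the obvious ``one at-most-one constraint per cell'' encoding of the overlap condition (whose size is $\sum_I(j_I-i_I)=\Theta(n^3)$) and instead see that occupancy should be deposited only in the $\O(\lg n)$ canonical pieces of a segment tree, with a monotone top-down marker $h$ catching the asymmetric collision between an interval that covers a shallow region and a deeper canonical piece of another. A minor point to settle is the boundary convention for ``disjoint'' (whether two intervals sharing only an endpoint conflict); this merely decides whether the cells are the $n-1$ gaps or the $n$ points of $[1,n]$, and leaves the construction otherwise unchanged.
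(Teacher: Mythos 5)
Your proposal is correct, and it reaches the $\O(n^2\lg n)$ bound by a genuinely different route than the paper. The paper fixes $k=\lfloor\lg n\rfloor$ blocks of size $b=n/\lfloor\lg n\rfloor$, classifies the edges of $\mathcal{I}_n$ into five types according to the blocks of the four endpoints, introduces auxiliary variables ($y_{\ell,r}$, $s_{i,r}$, $f_{\ell,j}$) for block-crossing patterns plus an ``interval propagation trick'' for intervals nested inside a block, and then recurses: the within-block-pair edges and the block-level graph are again (variants of) $\mathcal{I}_m$, so a strong induction with explicit constants gives at most $26\,n^2\lg n$ clauses. You instead take a balanced segment tree over the cells, decompose each interval into its $\O(\lg n)$ canonical nodes, and encode the overlap condition with one coverage marker $h_v$ per node, top-down propagation $\overline{h_{\mathrm{par}(v)}}\lor h_v$, the ancestor--piece conflict clauses, and a per-node at-most-one over intervals sharing a canonical piece; your soundness/completeness argument is sound (in particular, a canonical node of $I$ never has its parent covered by $I$ itself, so clause family (iii) cannot self-block), and the clause count $\sum_I|\mathrm{canon}(I)|=\O(n^2\lg n)$ is right, as is your remark that the endpoint-sharing convention of $\mathcal{I}_n$ only changes whether cells are points or gaps. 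What each approach buys: your construction is flat, has a shorter correctness proof, and isolates a reusable ``hierarchical occupancy'' gadget, essentially a tree-shaped generalization of the paper's interval propagation trick; the paper's block decomposition is messier but comes with explicit constants needed for its induction, specializes at one level of recursion to the $\O(n^{8/3})$ encoding used in the SLP application, and its block-level auxiliary variables plug directly into the scheduling encoding of the later sections. It would be worth checking, as the paper does implicitly via Definition~\ref{def:isp_encoding}, that your argument establishes the conditional satisfiability requirement for every assignment of the $x$-variables (it does: your completeness direction constructs the canonical extension, and your soundness direction shows every extension fails when the selection is not independent).
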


We show that this is more efficient than what is obtained via either the clique covering framework or the biclique covering framework. Moreover, we show that while BVA can obtain a more compact encoding in terms of the number of clauses (experimentally, since we do not have a theoretical guarantee), the structured encoding we present works better in practice. This is reminiscent of the idea of~\citet{haberlandtEffectiveAuxiliaryVariables2023} for \emph{Structured Bounded Variable Addition} (SBVA), which ended up winning the SAT competition 2023.

\subsection{Notation and preliminaries}

We will use $\top$ and $\bot$ to denote \emph{true} and \emph{false}, respectively. The negation of a variable $x$ is denoted $\overline{x}$, and a \emph{literal} is either a variable or the negation of a variable. Literals $x$ and $\overline{x}$ are said to be complementary, for any variable $x$. A \emph{clause} is a set of non-complementary literals, and a \emph{formula} is a set of clauses (we will thus identify $\land$ with $\cup$ when operating over clauses, and $\ell_1 \lor \ell_2$ with $\{\ell_1\} \cup \{\ell_2\}$ when operating over literals). The size of a formula is simply its number of clauses.
We denote the set of variables appearing in a formula $F$ as $\textsf{Var}(F)$. 
Given a set $\mathcal{V}$ of variables, an assignment is a function $\tau : \mathcal{V} \to \{\bot, \top\}$. For a variable $x \in \mathcal{V}$, we say $\tau \models x$ if $\tau(x) = \top$, and similarly, $\tau \models \overline{x}$ if $\tau(x) = \bot$. For a clause $C$, we say $\tau \models C$ if $\bigvee{\ell \in C} \tau \models C$, and a for a formula $F$, $\tau \models F$ if $\bigwedge_{C \in F} \tau \models C$. When writing this, we assume implicitly that $\textsf{Var}(F) \subseteq \mathcal{V}$.
For an assignment $\tau: \mathcal{V} \to \{\bot, \top\}$ and a formula $F$, now with $\mathcal{V} \subsetneq \textsf{Var}(F)$, we denote by $F|_\tau$ the formula obtained by eliminating from $F$ each clause satisfied by $\tau$, and then from each remaining clause eliminating every literal $\ell$ such that $\tau \models \overline{\ell}$. Note that $\textsf{Var}(F|_\tau) = \textsf{Var}(F) \setminus \mathcal{V}$.
We will write $\textsf{SAT}(F)$ to say that $\tau \models F$ for some assignment $\tau$, and $\textsf{UNSAT}(F)$ to mean that no such assignment exists.

\section{Subquadratic Encodings through Biclique Coverings}\label{sec:biclique_encodings}

\subsection{Clique Covering Encodings}
\newcommand{\amope}{\textsf{AMO}_{\textsf{PE}}}
Arguably, the \emph{At-Most-One} constraint (\textsf{AMO}) for variables $x_1, \ldots, x_n$ is the most elemental example of an encoding whose direct formulation can be asymptotically improved. The naïve formulation, often referred to as the \emph{pairwise encoding}, is simply
\[
\textsf{AMO}(x_1, \ldots, x_n) := \bigwedge_{1 \leq i \neq j \leq n} (\overline{x_i} \lor \overline{x_j}),
\]
using $\binom{n}{2}$ clauses, analogously to the dense case of~\Cref{eq:binaries}. On the other hand, several formulations using $\O(n)$ clauses are known~\citep{empirical,zhou2020comparisonsatencodingsatmostk}. The most compact is Chen's \emph{product encoding}, which uses $2n + 4\sqrt{n} + \O(\sqrt[4]{n})$ clauses~\citep{chen2010new}, and is essentially tight~\citep{kuceraLowerBoundCNF2019}. We will use notation $\textsf{AMO}_{\textsf{PE}}(x_1, \ldots, x_n)$ to denote the formula resulting from Chen's product encoding over variables $x_1, \ldots, x_n$.\footnote{We will rather arbitrarily use Chen's product encoding throughout this section, but any encoding using a linear number of clauses would work as well.}
As noted by~\citet{ignatievCardinalityEncodingsGraph2017}, we can interpret this result as saying that \emph{``the independent-set property can be encoded in $\O(n)$ clauses for a complete graph $K_n$''}. Let us now formalize what this means. 

\begin{definition}[Encoding the ISP]
     \label{def:isp_encoding}
    Given a graph $G = (V, E)$,  a set of variables $X = \{ x_v \mid v \in V\}$, and a potentially empty set of (auxiliary) variables $Y = \{y_1, \ldots, y_m\}$, we say that a formula $F$ with $\textsf{var}(F) = X \cup Y$ encodes the ``independent-set property'' (ISP) if for every assignment $\tau : X \to \{\bot, \top\}$,
    \[
        \textsf{SAT}(F |_\tau) \iff \{ v \in V \mid \tau(x_v) = \top \} \text{ is an independent set of } G.
    \]
\end{definition}
The trivial observation now is that for a complete graph $K_n$, a subset $S \subseteq V(K_n)$ of its vertices is independent if and only if $|S| \leq 1$, and thus $\textsf{AMO}_{\textsf{PE}}(V(K_n))$ encodes the ISP of $K_n$ with $\O(n)$ clauses. Note that we have written $\textsf{AMO}_{\textsf{PE}}(V(K_n))$, identifying each vertex $v$ with a corresponding variable $x_v$, and will do this repeatedly to avoid cluttering the notation.

To extend this idea to more general graphs,~\citet{ignatievCardinalityEncodingsGraph2017} used~\emph{clique coverings}: a \emph{``clique covering''} of a graph $G$ is a set $\{C_1, \ldots, C_m\}$ of subgraphs of $G$, each of which must be a clique, such that every edge $e \in E(G)$ belongs to some subgraph $C_i$. That is, $\bigcup_{i=1}^m E(C_i) = E(G)$.
We thus define a clique-covering-ISP (CC-ISP) encoding as follows:
\begin{definition}[CC-ISP encoding]
    Given a graph $G$, and a clique covering $\mathcal{C}$ of $G$, the formula 
    \[
        F_\mathcal{C} := \bigwedge_{C \in \mathcal{C}} \amope(V(C))
    \]
    is said to be a ``CC-ISP encoding'' for $G$.  
\end{definition}
Note that the number of clauses of $F_\mathcal{C}$ is 
\(
|F_\mathcal{C}| = \sum_{C \in \mathcal{C}} f(|V(C)|),
\)
where $f(n)$ is the number of clauses used by $\amope$ over $n$ variables, and thus $f(n) = \O(n)$.
\begin{lemma}[Implicit in~\citet{ignatievCardinalityEncodingsGraph2017}]\label{lemma:clique_encoding}
    Any CC-ISP encoding $F_\mathcal{C}$ for a graph $G$ indeed encodes the independent-set property (see~\Cref{def:isp_encoding}) for $G$.
\end{lemma}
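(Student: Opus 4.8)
The plan is to prove the biconditional in~\Cref{def:isp_encoding} for $F_\mathcal{C}$ by a direct argument in both directions, using the fact (already established in the text) that $\amope(V(C))$ encodes the ISP of the complete graph on $V(C)$, i.e.\ that for any assignment $\sigma$ to the variables $\{x_v \mid v \in V(C)\}$, the restricted formula $\amope(V(C))|_\sigma$ is satisfiable iff $\sigma$ sets at most one of those variables to $\top$. Since $F_\mathcal{C} = \bigwedge_{C \in \mathcal{C}} \amope(V(C))$ is a conjunction over cliques, and the only shared variables among the conjuncts are the $x_v$'s (each clique's auxiliary variables introduced by $\amope$ being fresh and disjoint across cliques), restriction distributes over the conjunction: $F_\mathcal{C}|_\tau = \bigwedge_{C \in \mathcal{C}} \bigl(\amope(V(C))|_\tau\bigr)$, and these conjuncts have pairwise disjoint variable sets. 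Hence $\textsf{SAT}(F_\mathcal{C}|_\tau)$ holds iff $\textsf{SAT}(\amope(V(C))|_\tau)$ holds for every $C \in \mathcal{C}$ simultaneously (independent satisfying assignments can be glued together precisely because the auxiliary variables don't interact).

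Fix an assignment $\tau : X \to \{\bot,\top\}$ and let $S = \{v \in V \mid \tau(x_v) = \top\}$. For the forward direction, suppose $\textsf{SAT}(F_\mathcal{C}|_\tau)$. Then for every clique $C \in \mathcal{C}$ we have $\textsf{SAT}(\amope(V(C))|_\tau)$, which by the AMO property means $|S \cap V(C)| \leq 1$. Now take any edge $\{u,v\} \in E(G)$; since $\mathcal{C}$ is a clique covering there is some $C \in \mathcal{C}$ with $\{u,v\} \in E(C)$, so in particular $u, v \in V(C)$. If both $u$ and $v$ were in $S$, then $|S \cap V(C)| \geq 2$, a contradiction. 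Hence not both endpoints of any edge lie in $S$, i.e.\ $S$ is an independent set of $G$. For the converse, suppose $S$ is an independent set of $G$. Then for each clique $C \in \mathcal{C}$, $S \cap V(C)$ cannot contain two vertices (any two vertices of a clique are adjacent, and $S$ is independent), so $|S \cap V(C)| \leq 1$; by the AMO property this gives $\textsf{SAT}(\amope(V(C))|_\tau)$. Since this holds for every $C$ and the restricted conjuncts have disjoint variables, we can combine the individual satisfying assignments into one, yielding $\textsf{SAT}(F_\mathcal{C}|_\tau)$.

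The only genuinely delicate point — and the step I would be most careful about — is the claim that satisfiability distributes over the conjunction $\bigwedge_{C} \amope(V(C))|_\tau$. This is \emph{not} true for arbitrary conjunctions of formulas, but holds here because the auxiliary variables of $\amope(V(C))$ are private to $C$: formally one should note that Chen's product encoding over a fresh copy of variables for each clique introduces auxiliary variables $Y_C$ with $Y_C \cap Y_{C'} = \varnothing$ for $C \neq C'$ and $Y_C \cap X = \varnothing$, so $\textsf{Var}(\amope(V(C))|_\tau) \subseteq (\{x_v : v \in V(C)\} \setminus X) \cup Y_C = Y_C$, and these are pairwise disjoint. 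A satisfying assignment for the whole restricted formula is then just the union of satisfying assignments for each conjunct, and conversely any satisfying assignment restricts to one for each conjunct. Everything else is the straightforward edge-by-edge / clique-by-clique bookkeeping sketched above, relying only on the already-proved AMO-encodes-ISP-of-$K_n$ fact and the defining property of a clique covering.
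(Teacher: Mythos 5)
Your proof is correct and follows essentially the same route as the paper's: both directions reduce to the facts that an independent set meets each clique in at most one vertex and that any violated edge lies inside some covering clique, whose restricted AMO subformula is then unsatisfiable. The only difference is that you spell out the gluing step (the auxiliary variables of the product encodings being private to each clique), which the paper compresses into the claim that the restricted formula is ``clearly satisfiable''; this is a welcome but not substantively different addition.
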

\begin{proof}
    Let $\tau$ be an assignment of the variables $\{x_v \mid v \in V(G)\}$, and $S_\tau = \{v \in V(G) \mid \tau(x_v) = \top\}$.
    Now, assume $S_\tau$ is an independent set of $G$. As the intersection of any independent set and a clique has at most one vertex, we have $|S_\tau \cap V(C)| \leq 1$ for every $C \in \mathcal{C}$, and thus
    $F_\mathcal{C}|_\tau$ is clearly satisfiable. Conversely, if $S_\tau$ is not an independent set of $G$, then some edge $e = \{u, v\}$ of $G$ has $|S_\tau \cap e| = 2$, but by definition of clique covering, $e \in E(C)$ for some $C \in \mathcal{C}$, and thus $e \subseteq V(C)$. But this implies
    $|S_\tau \cap V(C)| \geq 2$, and thus the subformula $\amope(V(C))|_\tau$ is already unsatisfiable, implying $F_\mathcal{C}|_\tau$ is unsatisfiable.
    %
\end{proof}

Unfortunately, this re-encoding technique is not useful for worst-case graphs. For example, in a complete bipartite graph (or \emph{biclique}) $K_{n, n}$, then the maximum clique has size $2$ (otherwise there would be a triangle, contradicting bipartiteness), and thus any clique covering $\mathcal{C}$ of $K_{n, n}$ covers at most one edge per clique, leading to $|\mathcal{C}| \geq |E(K_{n,n})| = n^2$, and thus $|F_\mathcal{C}| \geq n^2$. Recall that $|E(K_{n, n})| = n^2$ is the number of clauses used by the pairwise encoding (\Cref{eq:binaries}), and thus no improvement is achieved.
It is worth noting that clique coverings do yield an improvement for random graphs $G(n, p)$, using a result by~\citet{friezeCoveringEdgesRandom1995} (errors corrected in its arXiv version~\citep{frieze2011coveringedgesrandomgraph}).

\begin{proposition}\label{prop:random_coverings}
    Let $G \sim G(n, p)$ be a random graph on $n$ vertices obtained by adding each edge independently with probability $p$. Then, with high probability there exists a clique covering $\mathcal{C}$ for $G$ such that 
    \(
        |F_\mathcal{C}| = \O\left(\frac{n^2}{\lg n}\right).
    \)
\end{proposition}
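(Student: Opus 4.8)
The plan is to combine two facts about the random graph $G = G(n,p)$ (treating $p \in (0,1)$ as a fixed constant), both holding with high probability: (i) $G$ admits an edge clique cover $\mathcal{C}$ using only $\O(n^2/\lg^2 n)$ cliques, by the theorem of~\citet{friezeCoveringEdgesRandom1995}; and (ii) the clique number of $G$ satisfies $\omega(G) = \O(\lg n)$, a classical concentration estimate. Since $|F_\mathcal{C}| = \sum_{C \in \mathcal{C}} f(|V(C)|)$ with $f(m) = \O(m)$ (the bound for $\amope$ recalled earlier), we get $|F_\mathcal{C}| = \O\!\big(\sum_{C \in \mathcal{C}} |V(C)|\big)$; and because every clique $C$ of $G$ has at most $\omega(G)$ vertices, $\sum_{C \in \mathcal{C}} |V(C)| \le |\mathcal{C}| \cdot \omega(G) = \O(n^2/\lg^2 n)\cdot \O(\lg n) = \O(n^2/\lg n)$.

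Concretely, first I would state precisely the Frieze–Reed bound (citing the corrected arXiv version~\citep{frieze2011coveringedgesrandomgraph}), which yields, whp, a clique covering $\mathcal{C}$ of $G$ with $|\mathcal{C}| = \O(n^2/\lg^2 n)$; by~\Cref{lemma:clique_encoding} the associated formula $F_\mathcal{C}$ already encodes the ISP of $G$, so it only remains to bound its size. Next I would invoke the standard fact that for constant $p$ one has $\omega(G(n,p)) \le (2+o(1))\log_{1/(1-p)} n = \O(\lg n)$ whp, which follows from a first-moment bound on the number of $k$-cliques. A union bound over these two events shows that whp $G$ enjoys both simultaneously, and the chain of inequalities above then gives $|F_\mathcal{C}| = \O(n^2/\lg n)$.

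The genuine difficulty is entirely imported: producing a clique cover of a random graph of size $o(n^2)$ — in fact $\O\!\big((n/\lg n)^2\big)$ — is the hard content of~\citet{friezeCoveringEdgesRandom1995}, whereas the contribution here is only to feed that result into~\Cref{lemma:clique_encoding} and pair it with the elementary upper bound on $\omega(G)$. The two points needing care are (a) keeping $p$ bounded away from $0$ and $1$ so that both cited estimates apply (for $p = o(1)$ the claim is anyway trivial, since then $|E(G)| = o(n^2)$ and even the pairwise encoding has $o(n^2)$ clauses), and (b) observing that the two ``with high probability'' statements concern events each failing with probability $o(1)$, hence can be intersected by a union bound.
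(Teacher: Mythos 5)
Your core argument is correct and takes a genuinely different, more black-box route than the paper. You combine the Frieze--Reed bound on the \emph{number} of cliques in the cover, $|\mathcal{C}| = \O(n^2/\lg^2 n)$ whp, with the classical estimate $\omega(G(n,p)) = \O(\lg n)$ whp (for constant $p$), and conclude $\sum_{C \in \mathcal{C}} |V(C)| \leq |\mathcal{C}| \cdot \omega(G) = \O(n^2/\lg n)$; a union bound intersects the two events. The paper instead opens up the Frieze--Reed construction itself: round $i$ produces roughly $p n^2 i^2 e^{1-i} (\ln n)^{-2}$ cliques of size at most $\ln n / i$ (plus $\O(n^2 (\lg n)^{-4})$ leftover edges), and summing the series $\sum_i i e^{-i} = \O(1)$ gives the bound directly. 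Your version is cleaner and does not depend on the internals of the construction, at the price of needing the separate clique-number estimate; the paper's computation, on the other hand, needs no union bound and retains the factor $p$ (it actually yields $\O(p\, n^2/\lg n)$), which your product $|\mathcal{C}|\cdot\omega(G)$ discards.

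One remark in your proposal is wrong, however: the claim that for $p = o(1)$ the statement is ``anyway trivial, since $|E(G)| = o(n^2)$.'' Having $o(n^2)$ edges does not give $\O(n^2/\lg n)$ clauses; for instance $p = 1/\lg\lg n$ yields $\Theta(n^2/\lg\lg n)$ edges whp, which is asymptotically larger than $n^2/\lg n$, so the pairwise encoding does not suffice there. Since the Frieze--Reed theorem is proved for constant $p$, the honest fix is either to state the proposition for constant $p$ (which is what both your argument and the paper's really establish) or to treat the sub-constant regime by a separate argument, rather than dismissing it as trivial.
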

\begin{proof}[Proof Sketch]
    The covering $\mathcal{C}$ of~\citet{frieze2011coveringedgesrandomgraph} is constructed iteratively, where the $i$-th iteration uses roughly
    \(
        p n^2  i^2   e^{1-i}  (\ln n)^{-2}
    \)
    cliques of size at most $\ln n /i$, for $1 \leq i \leq \lceil 4 \ln \ln n\rceil$. After that,  potentially $\O(n^2 (\lg n)^{-4})$ cliques of size $2$ are added to deal with any uncovered edges, which can safely ignore since this is already $\O(n^2 \lg n)$.
    Thus, we have that 
    \[
        |F_\mathcal{C}| = \O\left(\sum_{C \in \mathcal{C}} |V(C)|\right) = \O\left(n^2 \cdot p \cdot e  \sum_{i=1}^{\lceil 4 \ln \ln n\rceil} \frac{i^2}{(\ln n)^2 e^i} \cdot \frac{\ln n}{i} \right) = \O\left(\frac{n^2}{\ln n} \sum_{i=1}^{\lceil 4 \ln \ln n\rceil} \frac{i}{e^i}\right).
    \]
    Using the standard power series $\sum_{i=1}^\infty i \cdot x^i = \frac{x}{(1-x)^2}$ for $|x| < 1$,   we have 
    \[
        \sum_{i=1}^{\lceil 4 \ln \ln n\rceil} \frac{i}{e^i} < \sum_{i=1}^{\infty} i (1/e)^i = \frac{e^{-1}}{(1 - 1/e)^2} = \O(1),
    \]
    so $|F_\mathcal{C}| = \O(n^2 / \lg n)$ as desired.
\end{proof}

We will next see that by using biclique coverings instead of clique coverings we get a worst-case asymptotic improvement.

\subsection{Biclique Covering Encodings}
For a biclique $K_{a, b}$ (complete bipartite graph with $a$ vertices on one part and $b$ on the other part), the independent-set property can also be encoded efficiently. Let $A$ and $B$ be the parts of $K_{a, b}$. Then, introduce an auxiliary variable $x_A$  which intuitively represents that some vertex of $A$ belongs to the desired independent set $S$. The desired formula is then
\begin{equation}\label{eq:biclique_encoding}
\textsf{BIS}(K_{a, b}) := \left(\bigwedge_{v \in A}(\overline{x_v} \lor x_A)\right) \land \left(\bigwedge_{v \in B}(\overline{x_A} \lor \overline{x_v}) \right).
\end{equation}
Intuitively, the first part of the formula is saying that if some vertex $v \in A$ is selected, then $x_A$ will be true, and the second part enforces that $x_A$ being true forbids any vertex in $B$ from being selected.
Note immediately that $|\textsf{BIS}(K_{a, b})| = a + b = |V(K_{a, b})|$, as opposed to the direct encoding which uses $|E(K_{a, b})| = a \cdot b$. As can be observed in~\Cref{fig:biclique_encoding} (dashed purple box), this wastes a clause when $a = b = 1$, so we shall assume that in this case $\textsf{BIS}(K_{1, 1})$ will be the direct encoding. 
We again lift the biclique encoding to arbitrary graphs by coverings. A \emph{biclique covering} $\mathcal{B}$ for a graph $G$ is simply a set of bicliques $B_1, \ldots, B_m$ that are subgraphs of $G$ and such that $\bigcup_{i=1}^m E(B_i) = E(G)$.
\begin{proposition}\label{prop:biclique-independence}
    Let $\mathcal{B} = \{B_1, \ldots, B_m\}$ be a biclique covering of a graph $G$, and $S \subseteq V(G)$ some set of vertices. Then, $S$ is an independent set of $G$ if and only if for every $1 \leq i \leq m$, $S \cap V(B_i)$ is an independent set of $B_i$.
\end{proposition}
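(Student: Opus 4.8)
The plan is to prove both implications directly, closely following the structure of the proof of \Cref{lemma:clique_encoding}. The one conceptual point worth highlighting is that the ``$\Rightarrow$'' direction uses only that each $B_i$ is a subgraph of $G$ (so that non-edges of $G$ stay non-edges of $B_i$), whereas the ``$\Leftarrow$'' direction is precisely where the covering property $\bigcup_{i=1}^m E(B_i) = E(G)$ is needed; the fact that the $B_i$ are \emph{bicliques} plays no role in this statement and will only be exploited afterwards, when we encode the ISP of each individual $B_i$ cheaply via \Cref{eq:biclique_encoding}.

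For ``$\Rightarrow$'', I would assume $S$ is independent in $G$, fix an index $i$, take two distinct vertices $u, v \in S \cap V(B_i)$, and observe that $\{u,v\} \notin E(G) \supseteq E(B_i)$, hence $\{u,v\} \notin E(B_i)$; since this holds for every such pair, $S \cap V(B_i)$ is an independent set of $B_i$. For ``$\Leftarrow$'', I would argue by contraposition: if $S$ is not independent in $G$, then some edge $e = \{u,v\} \in E(G)$ has both endpoints in $S$; because $\mathcal{B}$ is a biclique covering, $e \in E(B_i)$ for some $i$, so $u, v \in V(B_i)$ and thus $u, v \in S \cap V(B_i)$, witnessing that $S \cap V(B_i)$ is not an independent set of $B_i$. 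Contraposing this establishes the ``$\Leftarrow$'' direction, and combining both gives the claimed equivalence.

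Since everything here is elementary set manipulation, I do not expect a genuine obstacle; the only care needed is the bookkeeping of which hypothesis is invoked where. I would make that explicit on purpose, as it is exactly what clarifies why biclique coverings (rather than clique coverings) already suffice to encode the ISP, and it sets up the compact per-biclique encoding \Cref{eq:biclique_encoding} used in the sequel.
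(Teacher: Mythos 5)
Your proof is correct and follows essentially the same argument as the paper: the forward direction uses only that each $B_i$ is a subgraph of $G$, and the reverse direction is proved by contraposition, locating a violated edge inside some covering biclique. The extra remark that bipartiteness of the $B_i$ is irrelevant to this proposition is accurate and consistent with how the paper uses the result.
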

\begin{proof}
    Suppose first that $S$ is an independent set of $G$. Then trivially $S \cap V(B_i)$ is also an independent set of $G$ for every $1 \leq i \leq m$, and as each $B_i$ is a subgraph of $G$, the sets $S \cap V(B_i)$ are also independent sets of $B_i$.
    For the opposite direction, assume that $S$ is not an independent set of $G$, and thus $|e \cap S| = 2$ for some $e \in E(G)$. Then, as $e \in E(B_i)$ for some $i \in \{1, \ldots, m\}$ by definition of covering, we have $e \cap V(B_i) = e$ and thus
    \[
    |e \cap (S \cap V(B_i))| = |(e \cap V(B_i)) \cap S| = |e \cap S| = 2,
    \]
    implying $S \cap V(B_i)$ is not independent in $B_i$.
\end{proof}
Using~\Cref{prop:biclique-independence} we directly obtain the following analog to~\Cref{def:isp_encoding} and~\Cref{lemma:clique_encoding}:
\begin{proposition}\label{prop:biclique_encoding}
     Given a graph $G$, and a biclique covering $\mathcal{B}$ of $G$, the formula 
    \(
        F_\mathcal{B} := \bigwedge_{B \in \mathcal{B}} \textsf{BIS}(B)
    \)
    is said to be a ``BC-ISP'' encoding for $G$. The formula $F_\mathcal{B}$
     encodes the independent-set property of $G$, and has size 
    $\sum_{B \in \mathcal{B}} |V(B)|$.
\end{proposition}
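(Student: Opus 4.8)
The plan is to reduce the claim to a single biclique and then invoke \Cref{prop:biclique-independence}. The size statement is immediate from inspection of \Cref{eq:biclique_encoding}: each $\textsf{BIS}(B)$ uses $|V(B)|$ clauses, except for the degenerate $K_{1,1}$ where we agreed to fall back to the direct encoding and use a single clause, which is still $\leq |V(K_{1,1})| = 2$. Hence $|F_\mathcal{B}| = \sum_{B \in \mathcal{B}} |\textsf{BIS}(B)| \leq \sum_{B \in \mathcal{B}} |V(B)|$, with equality whenever no member of $\mathcal{B}$ is a single edge; I would state the bound this way rather than claim an exact identity. The remaining, substantive part is correctness, so I would fix an assignment $\tau : X \to \{\bot, \top\}$, set $S_\tau = \{v \in V(G) \mid \tau(x_v) = \top\}$, and aim to show $\textsf{SAT}(F_\mathcal{B}|_\tau) \iff S_\tau$ is an independent set of $G$, which is exactly \Cref{def:isp_encoding}.

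The first step is a decomposition observation. The only variables of $F_\mathcal{B}$ outside $X$ are the auxiliary variables introduced by the bicliques — one variable $x_{A_B}$ (the variable denoted $x_A$ in \Cref{eq:biclique_encoding}) for each $B \in \mathcal{B}$, and none for a $K_{1,1}$ — and each such $x_{A_B}$ occurs only inside $\textsf{BIS}(B)$. Therefore, after restricting by $\tau$ (which assigns all of $X$), we get $F_\mathcal{B}|_\tau = \bigwedge_{B \in \mathcal{B}} \textsf{BIS}(B)|_\tau$ as a conjunction of subformulas over pairwise-disjoint variable sets. Consequently $\textsf{SAT}(F_\mathcal{B}|_\tau)$ holds if and only if $\textsf{SAT}(\textsf{BIS}(B)|_\tau)$ holds for every $B \in \mathcal{B}$: one direction is trivial, and for the other a global satisfying assignment is obtained by gluing the per-biclique partial assignments, which is legitimate precisely because the auxiliary variables are private to their biclique.

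The second step is the single-biclique fact: for a biclique $B$ with parts $P, Q$, we have $\textsf{SAT}(\textsf{BIS}(B)|_\tau) \iff S_\tau \cap V(B)$ is an independent set of $B$. For $(\Leftarrow)$, an independent set of a complete bipartite graph meets at most one side, so $S_\tau \cap P = \varnothing$ or $S_\tau \cap Q = \varnothing$; setting $x_{A_B} := \bot$ in the first case and $x_{A_B} := \top$ in the second satisfies both conjuncts of \Cref{eq:biclique_encoding}, as a direct check shows. For $(\Rightarrow)$, if $S_\tau \cap V(B)$ is not independent there is an edge $\{u, v\}$ of $B$ with $u \in P$, $v \in Q$ and $\tau(x_u) = \tau(x_v) = \top$; then $(\overline{x_u} \lor x_{A_B})$ restricts to the unit clause $x_{A_B}$ and $(\overline{x_{A_B}} \lor \overline{x_v})$ restricts to the unit clause $\overline{x_{A_B}}$, so $\textsf{BIS}(B)|_\tau$ is unsatisfiable. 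The $K_{1,1}$ case is checked directly: the single clause $(\overline{x_u} \lor \overline{x_v})$ is falsified by $\tau$ exactly when $\{u, v\} \subseteq S_\tau$, i.e. exactly when $S_\tau \cap V(B)$ fails to be independent.

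Combining the two steps, $\textsf{SAT}(F_\mathcal{B}|_\tau)$ holds iff $S_\tau \cap V(B)$ is independent in $B$ for all $B \in \mathcal{B}$, which by \Cref{prop:biclique-independence} is equivalent to $S_\tau$ being independent in $G$. The only point deserving genuine care is the variable-disjointness argument that licenses splitting $\textsf{SAT}$ over the conjunction $\bigwedge_{B} \textsf{BIS}(B)|_\tau$; everything else is a short case analysis. I would therefore make the privacy of the auxiliary variables explicit, and I would not overstate the size claim — the clean statement is the upper bound $\sum_{B \in \mathcal{B}} |V(B)|$, which is tight unless the covering contains single-edge bicliques.
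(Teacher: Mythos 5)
Your proof is correct and follows the same route the paper intends: the paper gives no explicit proof, stating only that the proposition is obtained ``directly'' from \Cref{prop:biclique-independence}, and your write-up supplies exactly the two missing ingredients — the variable-disjointness argument that lets $\textsf{SAT}$ distribute over the conjunction, and the single-biclique correctness of $\textsf{BIS}$. Your remark that the size is really an upper bound $\sum_{B\in\mathcal{B}}|V(B)|$ (with strict inequality when the covering contains $K_{1,1}$ bicliques, given the paper's own convention of falling back to the direct encoding there) is a fair and harmless refinement of the stated claim.
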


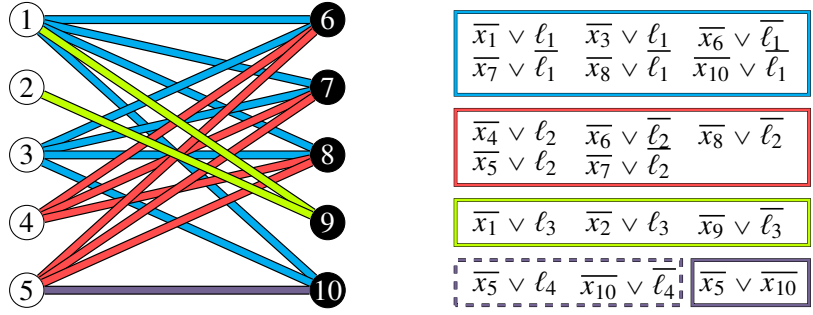
\begin{figure}
\centering
\centering
\begin{tikzpicture}[
    bipartitenode/.style={circle,draw,minimum size=13pt, inner sep=0pt},
    clausenode/.style={},
    edge/.style={-, line width=2.5pt},
]
\def\uphigh{0.3}
  \node[bipartitenode, fill=white, text=black] (u_left_0) at (0.0cm,-0.0cm) {1};
  \node[bipartitenode, fill=white, text=black] (u_left_1) at (0.0cm,-0.9cm) {2};
  \node[bipartitenode, fill=white, text=black] (u_left_2) at (0.0cm,-1.8cm) {3};
  \node[bipartitenode, fill=white, text=black] (u_left_3) at (0.0cm,-2.7cm) {4};
  \node[bipartitenode, fill=white, text=black] (u_left_4) at (0.0cm,-3.6cm) {5};
  \node[bipartitenode, fill=black, text=white] (v_right_5) at (4.0cm,-0.0cm) {6};
  \node[bipartitenode, fill=black, text=white] (v_right_6) at (4.0cm,-0.9cm) {7};
  \node[bipartitenode, fill=black, text=white] (v_right_7) at (4.0cm,-1.8cm) {8};
  \node[bipartitenode, fill=black, text=white] (v_right_8) at (4.0cm,-2.7cm) {9};
  \node[bipartitenode, fill=black, text=white] (v_right_9) at (4.0cm,-3.6cm) {10};
  
  \draw[line width=3.4] (u_left_0) -- (v_right_5);
  \draw[edge, color=\cola] (u_left_0) -- (v_right_5);
  
   \draw[line width=3.4] (u_left_0) -- (v_right_6);
  \draw[edge, color=\cola] (u_left_0) -- (v_right_6);

   \draw[line width=3.4] (u_left_0) -- (v_right_7);
  \draw[edge, color=\cola] (u_left_0) -- (v_right_7);

   \draw[line width=3.4] (u_left_0) -- (v_right_9);
  \draw[edge, color=\cola] (u_left_0) -- (v_right_9);
   \draw[line width=3.4] (u_left_2) -- (v_right_5);
  \draw[edge, color=\cola] (u_left_2) -- (v_right_5);

   \draw[line width=3.4] (u_left_2) -- (v_right_6);
  \draw[edge, color=\cola] (u_left_2) -- (v_right_6);
  
   \draw[line width=3.4] (u_left_2) -- (v_right_7);
  \draw[edge, color=\cola] (u_left_2) -- (v_right_7);
  
   \draw[line width=3.4] (u_left_2) -- (v_right_9);
  \draw[edge, color=\cola] (u_left_2) -- (v_right_9);

  \draw[line width=3.4] (u_left_3) -- (v_right_5);
  \draw[edge, color=\colc] (u_left_3) -- (v_right_5);

  \draw[line width=3.4] (u_left_3) -- (v_right_6);
  \draw[edge, color=\colc] (u_left_3) -- (v_right_6);

  \draw[line width=3.4] (u_left_3) -- (v_right_7);
  \draw[edge, color=\colc] (u_left_3) -- (v_right_7);

  \draw[line width=3.4] (u_left_4) -- (v_right_5);
  \draw[edge, color=\colc] (u_left_4) -- (v_right_5);

  \draw[line width=3.4] (u_left_4) -- (v_right_6);
  \draw[edge, color=\colc] (u_left_4) -- (v_right_6);

  \draw[line width=3.4] (u_left_4) -- (v_right_7);
  \draw[edge, color=\colc] (u_left_4) -- (v_right_7);

  \draw[line width=3.4] (u_left_0) -- (v_right_8);
  \draw[edge, color=lime] (u_left_0) -- (v_right_8);

   \draw[line width=3.4] (u_left_1) -- (v_right_8);
  \draw[edge, color=lime] (u_left_1) -- (v_right_8);

   \draw[line width=3.4] (u_left_4) -- (v_right_9);
  \draw[edge, color=ACMPurple] (u_left_4) -- (v_right_9);

   \draw[color=\cola, draw=black, line width=1.4pt]
    ($(5.7,-0.6-0.2)-(0,.2)$) rectangle ($(10.4,0.1-0.2)+(0,.2)$);

   \draw[color=\cola, draw=\cola, line width=1.0pt]
    ($(5.7,-0.6-0.2)-(0,.2)$) rectangle ($(10.4,0.1-0.2)+(0,.2)$);

  \draw[color=\colc, draw=black, line width=1.4pt]
    ($(5.7,-2.3+\uphigh)-(0,.2)$) rectangle ($(10.4,-1.7+\uphigh)+(0,.2)$);
    \draw[color=\colc, draw=\colc, line width=1.0pt]
    ($(5.7,-2.3+\uphigh)-(0,.2)$) rectangle ($(10.4,-1.7+\uphigh)+(0,.2)$);

  \draw[color=lime, draw=black, line width=1.4pt]
    ($(5.7,-3.1+\uphigh)-(0,.2)$) rectangle ($(10.4,-2.9+\uphigh)+(0,.2)$);
    \draw[color=lime, draw=lime, line width=1.0pt]
    ($(5.7,-3.1+\uphigh)-(0,.2)$) rectangle ($(10.4,-2.9+\uphigh)+(0,.2)$);

  \draw[color=ACMPurple, draw=black, line width=1.4pt, dashed]
    ($(5.7,-3.9+\uphigh)-(0,.2)$) rectangle ($(8.7,-3.7+\uphigh)+(0,.2)$);
    \draw[color=ACMPurple, draw=ACMPurple, line width=1.0pt, dashed]
    ($(5.7,-3.9+\uphigh)-(0,.2)$) rectangle ($(8.7,-3.7+\uphigh)+(0,.2)$);

      \draw[color=ACMPurple, draw=black, line width=1.4pt]
    ($(8.85,-3.9+\uphigh)-(0,.2)$) rectangle ($(10.4,-3.7+\uphigh)+(0,.2)$);
    \draw[color=ACMPurple, draw=ACMPurple, line width=1.0pt]
    ($(8.85,-3.9+\uphigh)-(0,.2)$) rectangle ($(10.4,-3.7+\uphigh)+(0,.2)$);

  \node[clausenode, text=black] (clause_u_left_0) at (6.5, 0.0-0.2) {$\overline{x_{1}} \lor \ell_1$};
  \node[clausenode, text=black] (clause_u_left_2) at (8.0, 0.0-0.2) {$\overline{x_{3}} \lor \ell_1$};
  \node[clausenode, text=black] (clause_v_right_5) at (6.5, -0.4-0.2) {$\overline{x_{7}} \lor \overline{\ell_1}$};
  \node[clausenode, text=black] (clause_v_right_6) at (8.0, -0.4-0.2) {$\overline{x_{8}} \lor \overline{\ell_1}$};
  \node[clausenode, text=black] (clause_v_right_7) at (9.5, 0.0-0.2) {$\overline{x_{6}} \lor \overline{\ell_1}$};
  \node[clausenode, text=black] (clause_v_right_9) at (9.5, -0.4-0.2) {$\overline{x_{10}} \lor \overline{\ell_1}$};
  \node[clausenode, text=black] (clause_u_left_3) at (6.5, -1.8 +\uphigh) {$\overline{x_{4}} \lor \ell_2$};
  \node[clausenode, text=black] (clause_u_left_4) at (6.5, -2.2+\uphigh) {$\overline{x_{5}} \lor \ell_2$};
  \node[clausenode, text=black] (clause_v_right_5) at (8.0, -1.8+\uphigh) {$\overline{x_{6}} \lor \overline{\ell_2}$};
  \node[clausenode, text=black] (clause_v_right_6) at (8.0, -2.2+\uphigh) {$\overline{x_{7}} \lor \overline{\ell_2}$};
  \node[clausenode, text=black] (clause_v_right_7) at (9.5, -1.8+\uphigh) {$\overline{x_{8}} \lor \overline{\ell_2}$};
  \node[clausenode, text=black] (clause_u_left_0) at (6.5, -3.0+\uphigh) {$\overline{x_{1}} \lor \ell_3$};
    \node[clausenode, text=black] (clause_v_right_8) at (8.0, -3.0+\uphigh) {$\overline{x_{2}} \lor \ell_3$};
  \node[clausenode, text=black] (clause_v_right_8) at (9.5, -3.0+\uphigh) {$\overline{x_{9}} \lor \overline{\ell_3}$};
  \node[clausenode, text=black] (clause_u_left_1) at (6.5, -3.8000000000000003+\uphigh) {$\overline{x_{5}} \lor \ell_4$};
  \node[clausenode, text=black] (clause_v_right_8) at (8.0, -3.8000000000000003+\uphigh) {$\overline{x_{10}} \lor \overline{\ell_4}$};
  \node[clausenode, text=black] (eclause_eu_left_1) at (9.6, -3.8+\uphigh) {$\overline{x_{5}} \lor \overline{x_{10}}$};








\end{tikzpicture}

\caption{A biclique covering of a bipartite graph with $10$ vertices and $17$ edges, resulting in a formula with 15 clauses. }\label{fig:biclique_encoding}
\end{figure}

The key difference now is made by a result stating that every graph admits a biclique covering that is asymptotically smaller than just taking its set of edges.
\begin{theorem}[\citet{chungDecompositionGraphsComplete1983}]\label{thm:erdos_chung_spencer}
    Every graph $G$ on $n$ vertices has a biclique covering $\mathcal{B}$ with 
    \(
        \sum_{B \in \mathcal{B}} |V(B)| = \O(n^2 / \lg n).
        \)
\end{theorem}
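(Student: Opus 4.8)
The plan is to prove the Chung--Erdős--Spencer bound by a divide-and-conquer argument whose base case is supplied by the classical fact that any bipartite graph on $n + n$ vertices can be covered by $O(n^2 / \lg n)$ bicliques each of total size $O(\lg n)$, so that the cost $\sum |V(B)|$ is $O(n^2 / \lg n) \cdot O(\lg n) = O(n^2)$ — wait, that is too weak, so the accounting has to be more careful. The correct strategy is: first reduce the general graph $G$ to the bipartite case. Given $G$ on vertex set $[n]$, split $[n]$ arbitrarily into two halves $L$ and $R$ of size $n/2$; the edges inside $L$, the edges inside $R$, and the edges between $L$ and $R$ partition $E(G)$. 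Recursing on $G[L]$ and $G[R]$ gives a covering of the first two parts with cost $C(n) \le 2 C(n/2) + (\text{cost of covering a bipartite graph on } n/2 + n/2 \text{ vertices})$. So the whole problem reduces to: \emph{show that a bipartite graph $H$ with parts of size $m$ has a biclique covering of total cost $O(m^2/\lg m)$}, and then check that the recursion $C(n) \le 2C(n/2) + O(n^2/\lg n)$ solves to $C(n) = O(n^2/\lg n)$ (the geometric-type sum is dominated by its top term since $n^2/\lg n$ grows faster than linearly).

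For the bipartite core, I would use the ``greedy by large biclique'' approach enabled by the Kővári--Sós--Turán / Bollobás-type bound: any bipartite graph on $m + m$ vertices with at least $\varepsilon m^2$ edges contains a complete bipartite $K_{t,t}$ with $t = \Omega(\lg m)$ (for $\varepsilon$ bounded below by a constant). However, removing one such biclique only removes $t^2 = \Theta(\lg^2 m)$ edges while costing $2t = \Theta(\lg m)$, i.e. cost per edge $\Theta(1/\lg m)$ — that is exactly what we want, \emph{but} once the density drops below a constant we can no longer guarantee large bicliques. The standard fix is a dyadic decomposition on the degrees/density: partition the adjacency matrix of $H$ into blocks, or equivalently peel off bicliques in phases, where in phase $i$ one works with the subgraph of remaining edges, which one argues still has enough density to extract $K_{t,t}$'s with $t \approx \lg m / i$ or similar, and the number of edges remaining after phase $i$ decays geometrically. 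Summing cost-per-edge $\times$ edges-removed over the phases gives a convergent series times $m^2 / \lg m$, exactly as in the computation already shown in the proof sketch of \Cref{prop:random_coverings}. Any residual edges after $O(\lg\lg m)$ phases are covered individually (as single edges, i.e. $K_{1,1}$'s), and one checks this leftover is $o(m^2/\lg m)$ or at worst $O(m^2/\lg m)$.

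The main obstacle is the density-maintenance step: one must argue that after greedily deleting bicliques the remaining graph does not become uniformly sparse in a way that kills the $K_{t,t}$ extraction — i.e. controlling the interaction between ``the whole graph is dense enough'' and ``every large-enough induced subpiece is dense enough,'' which is why the phased/dyadic bookkeeping is needed rather than a single greedy pass. A clean way to sidestep much of this is to invoke the result as a black box: \Cref{thm:erdos_chung_spencer} is cited to \citet{chungDecompositionGraphsComplete1983}, so the honest move here is to state that the bound is exactly their theorem and give only the reduction from general to bipartite graphs plus a one-paragraph indication of the counting (density threshold $\Rightarrow$ large biclique $\Rightarrow$ good cost-per-edge $\Rightarrow$ geometric sum), referring the reader to \citet{chungDecompositionGraphsComplete1983} (and to \citet{katonaSzemerediProblem1967,tuza1984covering} for related bounds) for the full technical details. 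I would therefore present the proof as: (1) bipartite reduction via halving; (2) recursion solves to $O(n^2/\lg n)$; (3) the bipartite case is the Chung--Erdős--Spencer estimate, sketched through the KST-type biclique-existence lemma and the convergent phase sum.
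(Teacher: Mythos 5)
The paper does not prove this statement at all: \Cref{thm:erdos_chung_spencer} is imported verbatim as a black-box citation of \citet{chungDecompositionGraphsComplete1983}, and the only place the paper engages with proof machinery of this type is the sketch of \Cref{prop:random_coverings} (the Frieze clique-covering phase sum). So there is nothing in the paper to match your argument against, and your own closing recommendation --- state the bound as their theorem and refer out for details --- is exactly what the paper does.

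Taken on its own terms, your reconstruction is a reasonable and essentially standard outline. The halving recursion $C(n)\le 2C(n/2)+O(n^2/\lg n)$ does solve to $O(n^2/\lg n)$ (the top level dominates since $2f(n/2)\approx f(n)/2$ for $f(n)=n^2/\lg n$), and the bipartite core via K\H{o}v\'ari--S\'os--Tur\'an extraction plus a convergent phase sum is the right shape; it mirrors the $\sum_i i e^{-i}$ computation the paper itself carries out in \Cref{prop:random_coverings}. Your flagged ``density-maintenance obstacle'' is not actually an obstacle if you define phase $i$ by the density threshold $2^{-i}$ rather than trying to preserve density under deletion: whenever the surviving edge set has density at least $2^{-i}$, KST yields a $K_{t,t}$ with $t=\Omega(\lg m/i)$, so cost per removed edge is $O(i/\lg m)$, at most $m^2 2^{1-i}$ edges are handled in phase $i$, and the total is $O\bigl(\tfrac{m^2}{\lg m}\sum_i i2^{-i}\bigr)=O(m^2/\lg m)$, with the $(\lg m)^{-\Theta(1)}$-density residue covered by single edges. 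What remains genuinely unproved in your sketch is the extraction lemma itself and the quantitative dependence of $t$ on the density, so as written this is a proof plan rather than a proof --- which is fine here, since the paper deliberately does not supply one either.
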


Combining~\Cref{thm:erdos_chung_spencer} and~\Cref{prop:biclique_encoding} we get the main result of this section.
\begin{theorem}[Formal version of~\Cref{thm:main}]\label{thm:subquadratic_encoding}
    For every graph $G$ on $n$ vertices, there is a formula $F$ that encodes the independent-set property of $G$ such that $|F| = \O(n^2 / \lg n)$.
\end{theorem}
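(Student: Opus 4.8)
The plan is to feed the biclique covering guaranteed by \Cref{thm:erdos_chung_spencer} directly into the BC-ISP construction of \Cref{prop:biclique_encoding}. Concretely, given a graph $G$ on $n$ vertices, I would first invoke \Cref{thm:erdos_chung_spencer} to obtain a biclique covering $\mathcal{B} = \{B_1, \dots, B_m\}$ of $G$ with $\sum_{B \in \mathcal{B}} |V(B)| = \O(n^2 / \lg n)$. Then I would set $F := F_\mathcal{B} = \bigwedge_{B \in \mathcal{B}} \textsf{BIS}(B)$, introducing one fresh auxiliary variable $x_{B_i}$ per biclique, exactly as in \Cref{eq:biclique_encoding}.

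Correctness is then immediate from \Cref{prop:biclique_encoding}: that proposition — itself a consequence of \Cref{prop:biclique-independence} together with the observation that $\textsf{BIS}(K_{a,b})$ encodes the independent-set property of a single biclique in the sense of \Cref{def:isp_encoding} — states that $F_\mathcal{B}$ encodes the independent-set property of $G$ for \emph{any} biclique covering $\mathcal{B}$. The size bound is likewise immediate: \Cref{prop:biclique_encoding} gives $|F| = \sum_{B \in \mathcal{B}} |V(B)|$, which is $\O(n^2/\lg n)$ by the choice of $\mathcal{B}$. (The convention that $\textsf{BIS}(K_{1,1})$ is taken to be the one-clause direct encoding only decreases this count, so the asymptotic bound is unaffected.)

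I do not expect a genuine obstacle here: essentially all the content is carried by the two cited results, and the argument is one of composition and bookkeeping. The only points that warrant an explicit check are (i) that a \emph{covering} — as opposed to an edge-disjoint decomposition or partition — is all that \Cref{prop:biclique-independence} needs (it is: the proof of that proposition uses only that every edge of $G$ lies in some $B_i$), and (ii) that the quantity $\sum_{B\in\mathcal{B}}|V(B)|$ that \Cref{thm:erdos_chung_spencer} bounds is exactly the clause-count measure appearing in \Cref{prop:biclique_encoding}, so the two estimates compose with no loss. One may additionally remark that the number of auxiliary variables introduced is $|\mathcal{B}| \le \sum_{B\in\mathcal{B}}|V(B)| = \O(n^2/\lg n)$, although the statement only concerns the number of clauses.
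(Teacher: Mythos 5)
Your proposal is correct and matches the paper's own argument, which likewise obtains \Cref{thm:subquadratic_encoding} by directly combining \Cref{thm:erdos_chung_spencer} with \Cref{prop:biclique_encoding}. The additional checks you flag (covering vs.\ partition, and that $\sum_{B\in\mathcal{B}}|V(B)|$ is exactly the clause count) are exactly the right sanity checks and hold as you state.
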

\Cref{fig:biclique_encoding} illustrates a biclique covering encoding, showing a reduction in size from the direct encoding. By avoiding the re-encoding of $K_{1, 1}$ cliques (lime green and purple in~\Cref{fig:biclique_encoding}), the size would go down to $13$.

We can trivially extend~\Cref{thm:subquadratic_encoding} to other graph problems as we show next.
\begin{itemize}
    \item (\textbf{Vertex Cover}) Since a set of vertices $S \subseteq V(G)$ is a vertex cover if and only if $V(G) \setminus S$ is an independent set, it suffices to invert the polarity of each literal $x_v$ (resp. $\overline{x_v}$) to $\overline{x_v}$ (resp. $x_v$) in the formula $F$ from~\Cref{thm:subquadratic_encoding}, resulting on a formula of the same size.
    \item ($k$-\textbf{Coloring}) A $k$-coloring of a graph $G$ is the same as a partition of $V(G)$ into $k$ independent sets. Naturally, if we have variables $x_{v, c}$ for $v \in V(G)$ and $c \in \{1, \ldots, k\}$ that indicate assigning color $c$ to vertex $v$, then we simply use the conjunction of $k$ formulas obtained from~\Cref{thm:subquadratic_encoding}, where the $c$-th of them is obtained by replacing each variable $x_v$ by $x_{v, c}$.
    \item (\textbf{Clique}) It suffices to use that a set of vertices $S \subseteq V(G)$ is a clique of $G$ if and only if $S$ is an independent set of the complement graph $\overline{G}$. 
\end{itemize}

We conclude this section by noting that~\Cref{thm:subquadratic_encoding}, and its analog to the other graph problems mentioned above can be made constructive by using a result of~\citet{mubayiFindingBipartiteSubgraphs2010}, which states that a biclique covering with the asymptotic guarantee of~\Cref{thm:erdos_chung_spencer} can be computed deterministically in polynomial time.

\section{Covering Frameworks}\label{sec:frameworks}

Even though the presented biclique-covering encodings are more efficient than clique-covering encodings over worst-case graphs, this is not necessarily the case on every graph. The main example being again $K_n$, for which we have the trivial clique covering $\mathcal{C} = \{K_n\}$, but for which
every biclique covering uses at least $\lfloor \lg n \rfloor$ many bicliques~\cite{fishburnBipartiteDimensionsBipartite1996}. Furthermore, the trivial clique covering results in formula with $\O(n)$ clauses, whereas any biclique covering results in $\Omega(n \lg n)$ clauses.
\begin{proposition}
    There is a BC-ISP encoding for $K_n$ using $\O(n \lg n)$ clauses, and any BC-ISP encoding for $K_n$ uses $\Omega(n \lg n)$ many clauses.
\end{proposition}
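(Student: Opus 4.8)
For the positive direction, the plan is to exhibit an explicit small biclique covering of $K_n$ and then invoke \Cref{prop:biclique_encoding}. Set $\ell = \lceil \lg n\rceil$ and fix an injection $\phi$ from $V(K_n)$ into $\{0,1\}^\ell$. For each coordinate $i \in \{1,\dots,\ell\}$ let $X_i$ (resp.\ $Y_i$) be the set of vertices whose label $\phi(\cdot)$ has a $0$ (resp.\ a $1$) in position $i$; since $K_n$ is complete, the complete bipartite graph $B_i$ with parts $X_i$ and $Y_i$ is a subgraph of $K_n$ (discard $B_i$ if a part is empty). Every edge $\{u,v\}$ is covered, because $\phi(u)\neq\phi(v)$ forces them to differ in some coordinate. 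Hence $\mathcal B=\{B_1,\dots,B_\ell\}$ is a biclique covering with $\sum_i |V(B_i)| \le \ell n = n\lceil\lg n\rceil$, and \Cref{prop:biclique_encoding} converts it into a BC-ISP encoding of $K_n$ with $\O(n\lg n)$ clauses.

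\textbf{Lower bound.} For the matching lower bound the key tool is a weighted strengthening of the bipartite-dimension bound (Hansel's lemma): every biclique covering $\mathcal B$ of $K_n$ satisfies $\sum_{B\in\mathcal B} |V(B)| \ge n\lg n$. I would prove this probabilistically. Write each $B\in\mathcal B$ as an ordered pair of parts $(X_B,Y_B)$, and independently for each $B$ flip a fair coin that ``marks'' either $X_B$ or $Y_B$. Call a vertex $v$ a \emph{survivor} if none of the bicliques containing $v$ marked the side on which $v$ lies; then $\Pr[v\text{ survives}] = 2^{-d_v}$, where $d_v := |\{B\in\mathcal B : v\in V(B)\}|$. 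The crucial observation is that \emph{at most one vertex survives}: if $u\neq v$ both survived, the edge $\{u,v\}$ lies in some $B$ with $u,v$ on opposite sides of $B$, and whichever side that coin marked kills one of them. Therefore $\sum_v 2^{-d_v} = \mathbb{E}[\#\text{survivors}] \le 1$, and since $\sum_{B}|V(B)| = \sum_v d_v$ by double counting, convexity of $t\mapsto 2^{-t}$ (Jensen, with $v$ uniform) gives $2^{-\frac1n\sum_v d_v} \le \frac1n\sum_v 2^{-d_v} \le \frac1n$, i.e.\ $\sum_B |V(B)| = \sum_v d_v \ge n\lg n$.

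\textbf{From vertices to clauses.} Finally I would translate the vertex count to a clause count. By \Cref{prop:biclique_encoding} a BC-ISP encoding $F_{\mathcal B}$ has size essentially $\sum_{B\in\mathcal B}|V(B)|$; more precisely each biclique $B$ contributes at least $|V(B)|/2$ clauses to $F_{\mathcal B}$ (the only lossy case being $K_{1,1}$, encoded directly as one clause despite having two vertices). Combined with Hansel's lemma this yields $|F_{\mathcal B}| \ge \tfrac12\sum_B |V(B)| \ge \tfrac12 n\lg n = \Omega(n\lg n)$, completing the proof. I expect the main obstacle to be the lower bound: one must identify the correct \emph{weighted} generalization of the ``$\ge\lg n$ bicliques'' fact, and the non-routine ingredient is the random-marking scheme together with the at-most-one-survivor observation — the rest (double counting, Jensen, and the $K_{1,1}$ bookkeeping) is short.
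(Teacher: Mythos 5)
Your proof is correct, and both halves take a genuinely different route from the paper's. For the upper bound the paper splits $V(K_n)$ in half, covers the cut with one biclique, and recurses, solving the recurrence $g(n)\le n+2g(\lceil n/2\rceil)$; your binary-labelling construction is the "unrolled" closed form of essentially the same covering (the recursive bisection induces exactly such a labelling), but your analysis is a one-line count $\sum_i|V(B_i)|\le n\lceil\lg n\rceil$ rather than a recurrence, which is arguably cleaner. For the lower bound the difference is more substantive: the paper imports a black-box result of Dong and Liu asserting that every vertex of $K_n$ lies in at least $\lfloor\lg n\rfloor$ bicliques of any covering and sums over vertices, whereas you give a self-contained probabilistic proof of Hansel's lemma ($\sum_B|V(B)|\ge n\lg n$) via random marking, the at-most-one-survivor observation, and Jensen. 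Your version is more robust: the aggregate Hansel bound is exactly what the final double count needs (note, e.g., that $K_4$ admits the covering $(\{1\},\{2,3,4\})$, $(\{2\},\{3,4\})$, $(\{3\},\{4\})$ in which vertex $1$ lies in only one biclique, so the per-vertex statement needs care in how it is invoked, while your argument sidesteps the issue entirely). Your explicit handling of the $K_{1,1}$ convention (each biclique contributes at least $|V(B)|/2$ clauses) is bookkeeping the paper glosses over, and it is done correctly.
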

\begin{proof}
For the upper bound, we construct a biclique covering $\mathcal{B}$ recursively. First, we separate $V(K_n)$ into two parts $L$ and $R$, with $|L| = \lceil n / 2\rceil$ and $|R| = \lfloor n/2 \rfloor$. Then, we add to $\mathcal{B}$ the biclique $K(L, R)$ between the vertices in $L$ and the vertices in $R$, and proceed recursively on both $L$ and $R$ obtaining biclique coverings $\mathcal{B}_L$ and $\mathcal{B}_R$ respectively. The total biclique covering of $K_n$ is then given by 
\(
\mathcal{B} = K(L, R) \cup \mathcal{B}_L \cup \mathcal{B}_R.
\)
Thus, if we let $g(n)$
denote the number of clauses used in an optimal BC-ISP encoding for $K_n$ (i.e., one that minimizes $\sum_{B \in \mathcal{B}} |V(B)|$), we have 
\(
    g(n) \leq n + g(\lfloor n/2 \rfloor) + g(\lceil n/2 \rceil) \leq n + 2 g(\lceil n/2\rceil),
\)
from where it follows that $g(n) = \O(n \lg n)$.

The lower bound follows from a nice result of~\citet{dongDecompositionGraphsComplete2007}, which states that 
in any biclique covering $\mathcal{B}$ of $K_n$, every vertex $v \in V(K_n)$ appears in at least $\lfloor \lg n \rfloor$ many bicliques in $\mathcal{B}$. Indeed, using that result it suffices to say that 
\begin{align*}
|F_\mathcal{B}| &= \sum_{B \in \mathcal{B}} |V(B)| = \sum_{B \in \mathcal{B}} \sum_{v \in V(K_n)} \mathmybb{1}_{v \in B}
 = \sum_{v \in V(K_n)} \sum_{B \in \mathcal{B}} \mathmybb{1}_{v \in B}  \geq \sum_{v \in V(K_n)} \lfloor \lg n \rfloor = \Omega(n \lg n). \qedhere
\end{align*}
\end{proof}

\subsection{Bounded Variable Addition}

Interestingly, it is possible to do better for $K_n$ by using a slight generalization of biclique coverings, as the~\emph{Bounded Variable Addition} (BVA)~\citep{mantheyAutomatedReencodingBoolean2012} preprocessing technique does. Let us explain first how BVA operates, which requires the following definition adapted from~\citet{haberlandtEffectiveAuxiliaryVariables2023}.
\begin{definition}[Grid product]
    Given a clause $L$, and set of clauses $\Gamma$, the \emph{``grid product''} of $L$ and $\Gamma$ is the set of clauses
    \(
    L \bowtie \Gamma := \bigcup_{\ell \in L} \bigcup_{\gamma \in \Gamma} \gamma \cup \{\ell\}. 
    \)
\end{definition}

BVA works by iteratively identifying subsets of a formula $F$ that can be expressed as a grid product $L \bowtie \Gamma$ (note that this does not require $L \in F$ nor $\Gamma \subseteq F$.), and introducing a new auxiliary variable $y$ which allows replacing $L \bowtie \Gamma$ by the clauses
\(
    \bigwedge_{\ell \in L} (\overline{y} \lor \ell) \land \bigwedge_{\gamma \in \Gamma} (y \lor \gamma).
\)
Naturally, resolving on the new variable $y$ yields the replaced set of clauses, and thus the new formula is equisatisfiable to the original one.
\begin{example}
    Let $L = (x_1 \lor x_2)$ and $\Gamma = \{(p \lor q), (q \lor r), (\overline{p} \lor \overline{r} \lor \overline{q})\}$, then the corresponding grid product is
    \[
    L \bowtie \Gamma = \{(x_1 \lor p \lor q), (x_1 \lor q \lor r), (x_1 \lor \overline{p} \lor \overline{r} \lor \overline{q}), (x_2 \lor p \lor q), (x_2 \lor q \lor r), (x_2 \lor \overline{p} \lor \overline{r} \lor \overline{q})\}.
    \]
    The new variable $y$ can be used to replace this set of clauses by the following ones:
    \[
    (\overline{y} \lor x_1) \land (\overline{y} \lor x_2) \land (y \lor p \lor q) \land (y \lor q \lor r) \land (y \lor \overline{p} \lor \overline{r} \lor \overline{q}).
    \]
\end{example}
\begin{definition}
    Given a formula $F$, and a subset of $F$ that can be expressed as a grid product $L \bowtie \Gamma$, we say that the formula
    \(
    F' := F \setminus (L \bowtie \Gamma) \land \bigwedge_{\ell \in L} (\overline{y} \lor \ell) \land \bigwedge_{\gamma \in \Gamma} (y \lor \gamma)
    \)
    is obtainable from $F$ by BVA, which we denote by $F \xrightarrow{\textsf{BVA}} F'$. If there is a sequence of formulas $F_1, \ldots, F_k$ such that
    \(
        F \xrightarrow{\textsf{BVA}} F_1 \xrightarrow{\textsf{BVA}} \ldots \xrightarrow{\textsf{BVA}} F_k,
    \)
    we say that $F_k$ is a potential BVA re-encoding of $F$ and write $F \rsquigarrow{\textsf{BVA}}{3} F_k$.
\end{definition}

Note that in the case of $\Gamma$ being a set of unit clauses, this matches~\Cref{eq:biclique_encoding}, from where we immediately have the following result:
\begin{proposition}\label{prop:bva_biclique}
    For every graph $G$ on $n$ vertices, there is a formula $F$ such that 
    \[\bigwedge_{\{u, v\} \in E(G)} (\overline{x_u} \lor \overline{x_v}) \rsquigarrow{\textsf{BVA}}{3} F\] and $|F| = \O(n^2 / \lg n)$.
\end{proposition}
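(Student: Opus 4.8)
The plan is to obtain $F$ by performing one BVA step per biclique in a suitable biclique covering of $G$, so that $F$ ends up being (up to complementing some auxiliary variables) the BC-ISP encoding $F_{\mathcal B}$ of \Cref{prop:biclique_encoding}. First I would invoke \Cref{thm:erdos_chung_spencer} to fix a biclique covering $\mathcal B = \{B_1, \dots, B_m\}$ of $G$ with $\sum_{B \in \mathcal B} |V(B)| = \O(n^2 / \lg n)$, and I would additionally require that $\mathcal B$ be \emph{edge-disjoint}, i.e.\ a genuine decomposition of $E(G)$; this is what the construction of~\citet{chungDecompositionGraphsComplete1983} actually produces, and --- as explained at the end --- edge-disjointness is precisely the feature the argument relies on. Write each $B_i$ as the complete bipartite graph $K(P_i, Q_i)$ with (nonempty) parts $P_i$ and $Q_i$.

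Starting from the direct encoding $F_0 := \bigwedge_{\{u, v\} \in E(G)} (\overline{x_u} \lor \overline{x_v})$, I would process $B_1, \dots, B_m$ one at a time, maintaining a running formula $F_i$. To process $B_i$, set $L_i := \bigvee_{p \in P_i} \overline{x_p}$ (a single clause) and $\Gamma_i := \{\, \{\overline{x_q}\} \mid q \in Q_i \,\}$ (a set of unit clauses). Unwinding the definition of the grid product, $L_i \bowtie \Gamma_i = \{\, \overline{x_p} \lor \overline{x_q} \mid p \in P_i,\ q \in Q_i \,\}$, which is exactly the set of edge-clauses of $B_i$. Since $\mathcal B$ is edge-disjoint and the first $i-1$ steps only deleted edge-clauses belonging to $B_1, \dots, B_{i-1}$ and inserted clauses over the fresh variables $y_1, \dots, y_{i-1}$, every clause of $L_i \bowtie \Gamma_i$ is still present in $F_{i-1}$; hence $L_i \bowtie \Gamma_i$ is a genuine subset of $F_{i-1}$ that is a grid product, so a BVA step introducing a new variable $y_i$ is legitimate. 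This step removes the $|P_i| \cdot |Q_i|$ edge-clauses of $B_i$ and inserts $\bigwedge_{p \in P_i} (\overline{y_i} \lor \overline{x_p}) \land \bigwedge_{q \in Q_i} (y_i \lor \overline{x_q})$, namely $|V(B_i)|$ clauses --- which, up to complementing $y_i$, are exactly the clauses of $\textsf{BIS}(B_i)$ from \Cref{eq:biclique_encoding}. Chaining the $m$ single-step re-encodings and taking $F := F_m$, we get $F_0 \rsquigarrow{\textsf{BVA}}{3} F$.

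It then remains only to bound $|F|$. After the $m$ steps every original edge-clause has been deleted, since each edge of $G$ lies in exactly one $B_i$ and is removed when that $B_i$ is processed; moreover the clauses inserted for distinct indices $i$ involve distinct fresh variables and so are pairwise distinct. Hence $|F| = \sum_{i=1}^{m} |V(B_i)| = \sum_{B \in \mathcal B} |V(B)| = \O(n^2 / \lg n)$ by \Cref{thm:erdos_chung_spencer}, as required. (One can shave a few clauses by simply not processing the $K_{1,1}$ members of $\mathcal B$ --- for which a BVA step trades one clause for two --- but this does not change the asymptotics; and if one additionally wants $F$ to encode the independent-set property of $G$ in the sense of \Cref{def:isp_encoding}, this is immediate, since $F$ is the BC-ISP encoding $F_{\mathcal B}$ up to complementing auxiliary variables, so \Cref{prop:biclique_encoding} applies.)

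The one delicate point is the legitimacy of each BVA step in the middle paragraph, i.e.\ that the edge-clauses of $B_i$ have not already been consumed by the preceding $i-1$ steps, and this is the sole place where edge-disjointness of $\mathcal B$ is used. It is genuinely needed: for a covering that is not a decomposition, the edges of $B_i$ left uncovered by $B_1, \dots, B_{i-1}$ form an arbitrary bipartite subgraph of $K(P_i, Q_i)$, which need not be complete bipartite and hence need not be expressible as a grid product of the required shape; and the obvious fix of refining a covering into a decomposition can inflate $\sum_B |V(B)|$ back up to $\Omega(n^2)$, destroying the bound. So the whole argument reduces to the (standard) fact that the bound of~\citet{chungDecompositionGraphsComplete1983} is attained by an edge-disjoint decomposition; with that in hand, everything else is the routine unwinding of definitions that the remark immediately before the statement already anticipates.
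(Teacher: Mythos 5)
Your proposal is correct and follows essentially the same route as the paper, which justifies the proposition in one line by observing that a BVA step with $\Gamma$ a set of unit clauses is exactly the replacement of a biclique's edge-clauses by $\textsf{BIS}$, and then invoking \Cref{thm:erdos_chung_spencer}. Your additional insistence that the covering be an edge-disjoint decomposition (so that each grid product $L_i \bowtie \Gamma_i$ is genuinely still a subset of the current formula when its BVA step is performed) is a real subtlety that the paper's terse justification elides, and it is correctly discharged since the construction of \citet{chungDecompositionGraphsComplete1983} is indeed a decomposition.
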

    
An important difference between BVA and covering re-encodings is that the new variables created by BVA can also be part of the identified grid products, whereas the (bi)clique covering encodings only use the original set of variables.  
We will show that this difference is enough to obtain a linear encoding for the independent-set property of $K_n$, and thus showing that BVA re-encodes pairwise cardinality constraints into a linear number of clauses. While this was already observed without proof in the original BVA paper of~\citet{mantheyAutomatedReencodingBoolean2012}, we remark that the authors have declared that their results were only empirical~\citep{githubExactlyClauses}. We now provide a formal proof, noting that it applies to an idealized version of BVA, as opposed to its actual implementation, which is more subtle.

\begin{proposition}
    There is a formula $F$ such that $\textsf{AMO}(x_1, \ldots, x_n) \rsquigarrow{\textsf{BVA}}{3} F$, and $|F| =  \O(n)$.
\end{proposition}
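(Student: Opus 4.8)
The plan is to exhibit an explicit finite sequence of (idealized) BVA steps that rebuilds $\textsf{AMO}(x_1,\dots,x_n)$ as a recursive ``grouping'' encoding with groups of size two, and then to check directly that the result has $\O(n)$ clauses. Assume $n$ is even (the odd case needs only a single group of size one or three and changes nothing asymptotically), partition the variables into pairs $P_p=\{x_{2p-1},x_{2p}\}$ for $p=1,\dots,n/2$, and process the pairs in increasing order of $p$. At step $p$, introduce a fresh variable $u_p$ via the grid product with $\Gamma_p=\{(\overline{x_{2p-1}}),(\overline{x_{2p}})\}$ and $L_p=\{\overline{u_q}\mid q<p\}\cup\{\overline{x_k}\mid k>2p\}$. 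Since $\Gamma_p$ consists of unit clauses, the BVA rule deletes $L_p\bowtie\Gamma_p$ and adds the clauses $(u_p\lor\overline{x_{2p-1}})$, $(u_p\lor\overline{x_{2p}})$ together with $(\overline{u_p}\lor\overline{x_k})$ for $k>2p$ and $(\overline{u_p}\lor\overline{u_q})$ for $q<p$; note the first group of these is exactly $\textsf{BIS}$ (see \Cref{eq:biclique_encoding}) applied to the biclique between $P_p$ and the still-unprocessed variables $\{x_k\mid k>2p\}$, so a step replaces that biclique by its compact encoding while threading the earlier $u_q$ into $L_p$ so the $\textsf{AMO}$ structure propagates onto the $u$'s.

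First I would check that every step is legal, i.e.\ that $L_p\bowtie\Gamma_p$ lies in the current formula $F_{p-1}$. This reduces to two bookkeeping facts: an original clause $(\overline{x_k}\lor\overline{x_{2p-1}})$ with $k>2p$ is still present (a step only deletes original pairs with an endpoint in an \emph{already processed} group), and a clause $(\overline{u_q}\lor\overline{x_{2p-1}})$ with $q<p$ is present (it was created at step $q$, because $\overline{x_{2p-1}}\in L_q$, and cannot be deleted before step $p$). Next I would determine what survives after all $n/2$ steps: the $n/2$ binary clauses $(\overline{x_{2p-1}}\lor\overline{x_{2p}})$ and the $n$ implication clauses are never touched; each byproduct clause $(\overline{u_p}\lor\overline{x_k})$ is deleted at step $\lceil k/2\rceil>p$ (its $\Gamma$ contains $(\overline{x_k})$ and its $L$ contains $\overline{u_p}$); and the clauses $(\overline{u_p}\lor\overline{u_q})$ survive and are precisely $\textsf{AMO}(u_1,\dots,u_{n/2})$. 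Hence $F_{n/2}$ is $\O(n)$ frozen clauses plus a pairwise $\textsf{AMO}$ over $\lceil n/2\rceil$ fresh variables.

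Then I would recurse on that $\textsf{AMO}$ block. Since the only clauses of $F_{n/2}$ containing a negated $u$-literal are the ones in the block, and all subsequent grid products use only $\overline{u}$- and newly introduced literals, the recursion never disturbs the frozen clauses, so the legality and survival arguments of the first round apply verbatim at every level. Counting is then immediate: if $S(n)$ is the final clause count, the first round freezes at most $\tfrac{3n}{2}$ clauses, so $S(n)\le \tfrac{3n}{2}+S(\lceil n/2\rceil)$ and hence $S(n)=\O(n)$; the recursion stops once at most two variables remain, and the whole process uses $\sum_{\ell\ge 0}\O(n/2^{\ell})=\O(n)$ BVA steps, so it is a legitimate finite sequence and $\textsf{AMO}(x_1,\dots,x_n)\rsquigarrow{\textsf{BVA}}{3}F$ for the resulting $F$. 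I expect the survival analysis of the byproduct clauses to be the crux: it is exactly the fact that every $(\overline{u_p}\lor\overline{x_k})$ is eventually eliminated that leaves the $u$-clauses as a clean pairwise $\textsf{AMO}$, which is what makes the construction self-similar and lets the recursion close.
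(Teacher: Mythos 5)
Your construction is correct, but it is organized differently from the paper's. The paper peels off a single triple $\{x_1,x_2,x_3\}$ per BVA step, via the grid product $\{\overline{x_1},\overline{x_2},\overline{x_3}\}\bowtie\{(\overline{x_4}),\ldots,(\overline{x_n})\}$, and the whole point is that the residue --- the new clauses $(y\lor\overline{x_j})$ together with the untouched $(\overline{x_i}\lor\overline{x_j})$ for $4\le i<j\le n$ --- is \emph{syntactically} $\textsf{AMO}(\overline{y},x_4,\ldots,x_n)$, so a one-line induction yields the exact bound $3n-6$ with no bookkeeping about which clauses survive. You instead run a full round of $n/2$ grid products (one per pair, threading the earlier commanders $u_q$ into each $L_p$), show that every intermediate clause $(\overline{u_p}\lor\overline{x_k})$ is eventually consumed, and recurse on the clean $\textsf{AMO}(u_1,\ldots,u_{n/2})$ that remains. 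Your legality and survival analysis checks out: an inter-pair original clause $(\overline{x_a}\lor\overline{x_b})$ with $a<b$ is deleted exactly at step $\lceil a/2\rceil$, and each byproduct $(\overline{u_q}\lor\overline{x_k})$ is created at step $q$ and deleted exactly at step $\lceil k/2\rceil>q$, so every grid product you invoke is indeed a subset of the current formula; the recurrence $S(n)\le\tfrac{3n}{2}+S(\lceil n/2\rceil)$ then gives $|F|=\O(n)$ (about $3n$, matching the paper's constant). What your version buys is a balanced commander tree of logarithmic depth; what the paper's version buys is that no survival argument is needed at all, since each step's residue is literally another instance of the statement being proved. Both are legitimate derivations for the idealized BVA rule as defined in the paper.
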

\begin{proof}
    We prove that for $n \geq 3$, there is such an $F$ with $|F| = 3n - 6$. The proof is by induction on $n$. For $n = 3$ and $n=4$, we simply take $F := \textsf{AMO}(x_1, \ldots, x_n)$, which has size $\binom{3}{2} = 3 = 3 \cdot 3 - 6$ and $\binom{4}{2} = 6 = 3 \cdot 4 - 6$ respectively. 
    For $n \geq 5$, consider first the grid product $\{x_1, x_2, x_3\} \bowtie \{x_4, x_5, \ldots, x_n\}$, which is clearly in $\textsf{AMO}(x_1, \ldots, x_n)$. The replacement of this grid product by BVA can be split into the following sets of clauses:
    \begin{enumerate}
        \item The clauses $(\overline{x_1} \lor \overline{x_2})$, $(\overline{x_1} \lor \overline{x_3})$, and $(\overline{x_2} \lor \overline{x_3})$, unaltered by the replacement.
        \item The clauses $(\overline{y} \lor \ell)$ for $\ell \in \{\overline{x_1}, \overline{x_2}, \overline{x_3}\}$.
        \item The clauses $(y \lor \gamma)$ for $\gamma \in \{\overline{x_4}, \overline{x_5}, \ldots, \overline{x_n}\}$.
        \item The clauses $(\overline{x_i} \lor \overline{x_j})$ for $4 \leq i < j \leq n$, also unaltered by the replacement.
    \end{enumerate}
    There are only 6 clauses of types (1) and (2), and for the clauses of types (3) and (4), the key observation is that they correspond exactly to the formula
    \(
        \textsf{AMO}(\overline{y}, x_4, \ldots, x_n),
    \)
    which by the induction hypothesis admits a BVA re-encoding $F'$ of size $3(n-2) - 6$. Thus, denoting by $S$ the set of clauses of types (1) and (2), we have 
    \[
    \textsf{AMO}(x_1, \ldots, x_n) \xrightarrow{\textsf{BVA}} S \cup \textsf{AMO}(\overline{y}, x_4, \ldots, x_n) \rsquigarrow{\textsf{BVA}}{3} S \cup F',
    \]
    and $|S \cup F'| = 6 + 3(n-2) - 6 = 3n - 6$, as desired.
\end{proof}

\subsection{A Lower Bound for the Independent-Set Property}

We now turn our attention to whether BVA, or some other re-encoding technique, could yield an asymptotic improvement over~\Cref{thm:subquadratic_encoding}. While we do not manage to answer this question in full generality, we show that if we restrict ourselves to $2$-CNF formulas (as the ones obtained by BVA, or the covering encodings), then we can prove a matching lower bound of $\Omega(n^2 / \lg n)$, using a similar argument to~\citet[Theorem 1.7]{juknaComputationalComplexityGraphs2013}. In fact our results holds for $k$-CNF (at most $k$ literals per clause) for any fixed $k$.

\begin{proposition}\label{thm:lower_bound}
    Fix an integer $k \geq 2$. Then, for any sufficiently large $n$, there is a graph $G_n$ on $n$ vertices such that any $k$-CNF formula encoding the independent-set property of $G_n$ has size $\Omega(n^2 / \lg n)$.
\end{proposition}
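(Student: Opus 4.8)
The plan is a counting argument in the spirit of~\citet[Theorem~1.7]{juknaComputationalComplexityGraphs2013}: there are $2^{\binom n2}$ labeled graphs on $n$ vertices, but if $s$ is small there are too few $k$-CNF formulas of size $\le s$ to encode all of them, because such an encoding — read together with the designation of which variables are the primary $x_v$'s — pins down $G$ exactly. So no single $G_n$ needs to be exhibited explicitly; it emerges from a union-bound contradiction. First I would record the \emph{reconstruction} step: if $F$ encodes the ISP of $G=([n],E)$ in the sense of~\Cref{def:isp_encoding}, then for each pair $u\ne v$, the restriction $F|_\tau$ where $\tau$ sets $x_u=x_v=\top$ and every other primary variable to $\bot$ satisfies $\textsf{UNSAT}(F|_\tau)\iff\{u,v\}\in E$, since $S_\tau=\{u,v\}$ is independent exactly when $\{u,v\}\notin E$. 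Hence $F$, together with the labeling of its primary variables, determines $E$ uniquely; the auxiliary variables play no role in this reconstruction.

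Next I would bound the number of candidate formulas. A $k$-CNF with $s$ clauses has at most $ks$ literal occurrences, hence at most $ks$ auxiliary variables actually occur in it; after renaming those canonically (to $z_1,\dots,z_{ks}$, say) and padding, every size-$\le s$ $k$-CNF ISP encoding of an $n$-vertex graph becomes a $k$-CNF of size $\le s$ over the \emph{fixed} variable set of size $N:=n+ks$ with $v_1,\dots,v_n$ designated primary. The number of admissible clauses over $N$ variables is $P\le(2N)^{k+1}$, so the number of such formulas is at most $\sum_{j\le s}\binom{P}{j}\le 2P^{s}$. By the reconstruction step, the map sending a graph that admits such an encoding to its canonical encoding is injective, so the number of $n$-vertex labeled graphs admitting a size-$\le s$ $k$-CNF ISP encoding is at most $2P^{s}$.

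Finally I would close the argument by contradiction: suppose \emph{every} $n$-vertex graph admits such an encoding with $s\le c\,n^2/\lg n$, for a constant $c=c(k)$ to be fixed. Then $N=n+ks\le n^2$ once $n$ is large, so $\log_2(2N)=O(\lg n)$ and $\log_2\!\big(2P^{s}\big)\le 1+s(k+1)\log_2(2N)=O(k s\lg n)$. Comparing with $\log_2 2^{\binom n2}=\binom n2=\Theta(n^2)$ gives $\Omega(n^2)\le O(k s\lg n)\le O(kc\,n^2)$ with absolute implied constants; making those constants explicit and choosing $c$ small enough (e.g.\ $c=1/(24k)$) yields $n^2=O(1)$, impossible for large $n$. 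Thus some graph $G_n$ on $n$ vertices requires $\Omega(n^2/\lg n)$ clauses in any $k$-CNF ISP encoding.

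The main obstacle — really the only delicate point — is the apparent circularity in the variable count: the estimate $\log_2 N=O(\lg n)$ that produces the crucial $\lg n$ denominator holds only because the contradiction hypothesis has already forced $s=O(n^2/\lg n)$, hence $N=\mathrm{poly}(n)$. One must also be careful to make the handling of auxiliary variables (their number bounded by $ks$, and their canonical renaming) fully precise, so that the graph-to-formula map is genuinely injective and the clause count $P$ depends only on $N=n+ks$ rather than on some a priori unbounded supply of auxiliary variables.
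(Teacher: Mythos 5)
Your proposal is correct and follows essentially the same counting argument as the paper: the same reconstruction step showing the formula determines the graph, the same count of $k$-CNF formulas over a bounded variable set, and the same comparison against $2^{\binom{n}{2}}$. The only (cosmetic) differences are that you bound the number of auxiliary variables by $ks$ via literal occurrences where the paper bounds the total variable count by the clause count after removing pure literals, and you close with an explicit-constant contradiction where the paper unwinds $t\lg t\geq cn^2$ directly.
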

\begin{proof}
    There are $2^{\binom{n}{2}}$ many distinct $n$-vertex graphs, and we will prove first that each of them requires a different formula to encode its independent-set property. Assume, expecting a contradiction, that some formula $F$ encodes the independent-set property for two distinct $n$-vertex graphs $G$ and $G'$. Then, take an edge $\{u, v\}$ that is in $G$ but not in $G'$ (without loss of generality), and consider the assignment $\tau$ defined by
    \(
    \tau(x_w) = \begin{cases}
        \top & \text{if } w \in \{u, v\},\\
        \bot & \text{otherwise.}
    \end{cases}
    \)
    Then, according to~\Cref{def:isp_encoding}, since $F$ encodes the independent-set property for $G$ we have that $F_\tau$ is satisfiable iff $\{u, v\}$ is an independent set in $G$, which is not the case since $\{u, v\} \in E(G)$, thus making $F_\tau$ unsatisfiable. However, since $F$ also encodes the independent-set property for $G'$, we conclude that $F_\tau$ is satisfiable, which is a contradiction.
    Now, we fix without loss of generality the set of variables for the $k$-CNF formulas encoding the independent-set property to be $\{x_1, \ldots, x_n, y_1, \ldots, y_r\}$, where $r$ is the maximum number of auxiliary variables used in any of the formulas.
    Consider now that there are $2^r\binom{m}{r}$ many possible $r$-ary clauses from a set of $m$ variables (since each of the $k$ literals can have two polarities), and thus, the number of $k$-CNF formulas with $t$ clauses is at most 
\[
\binom{\sum_{r=1}^{k} 2^r\binom{m}{r}}{t} \leq \left(k 2^k\binom{m}{k}\right)^t \leq \left(k 2^k \binom{t}{k}\right)^t,
\]
where the last inequality assumes without loss of generality that the formulas do not contain pure literals (which can be simply removed), and thus the number of variables $m$ is at most the number of clauses $t$. As we proved that each different graph needs a different formula, we have that 
\begin{align*}
    & \left(k 2^k \binom{t}{k}\right)^t \geq 2^{\binom{n}{2}} 
    \iff  t \cdot \lg\left(k 2^k\binom{t}{k}\right) \geq \binom{n}{2} \\
    \iff & t \cdot \lg(t) \geq c \cdot n^2 \iff t \geq c \cdot n^2 / \lg(t) \tag{for some constant $c > 0$}.
\end{align*}
But unless $t = \Omega(n^2)$, in which case we can directly conclude the result, we have that $\lg(t) \leq 2 \lg(n)$, and thus we conclude that $t = \Omega(n^2 / \lg n)$.

\end{proof}

\section{Disjoint Intervals}\label{sec:disjoint-intervals}

We now study the independent-set property over a class of graphs that is relevant for a variety of scheduling-related problems: \emph{interval graphs}. 
\begin{definition}
    For any $n \geq 2$, we define the full and discrete interval graph $\mathcal{I}_n$ as the graph whose vertices are all the intervals $[i, j]$ for integers $1 \leq i < j \leq n$, 
    and there are edges between any two intervals that intersect.
\end{definition}

\newcounter{interval}
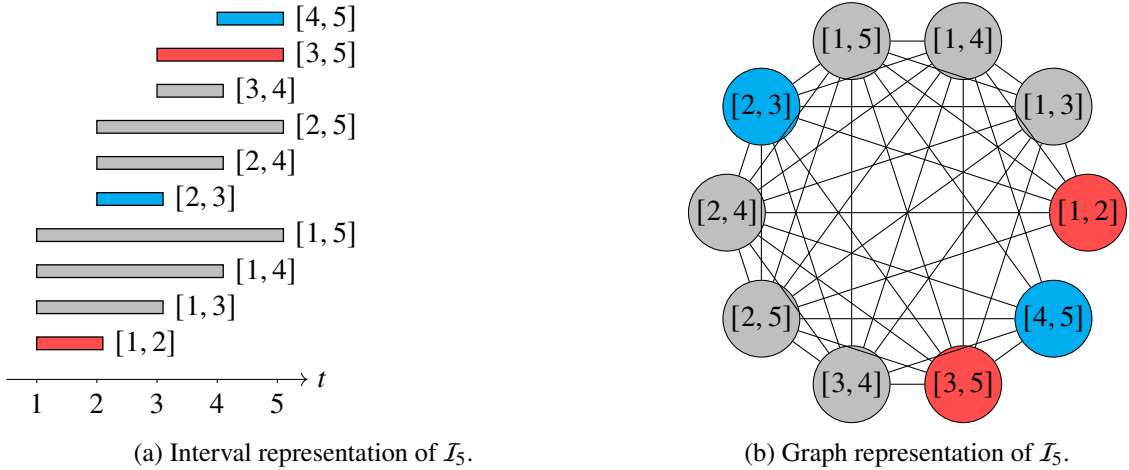
\begin{figure}
    \begin{subfigure}{0.48\textwidth}
        
    \begin{tikzpicture}[x=1cm, y=0.6cm, scale=0.8]
  \def\n{5}

  \setcounter{interval}{0}

        \newcommand{\thickness}{6.0pt}

      \newlength{\thicknesss}
  \setlength{\thicknesss}{6pt}    

  \pgfmathsetlengthmacro{\halfthick}{\thicknesss/2}

  \filldraw[ fill=red!70!white, draw=black, line width=0.5pt]
    ($(1,1)-(0,\halfthick)$) rectangle ($(2.1,1)+(0,\halfthick)$);
  \filldraw[ fill=lightgray, draw=black, line width=0.5pt]
    ($(1,2)-(0,\halfthick)$) rectangle ($(3.1,2)+(0,\halfthick)$);
    \filldraw[ fill=lightgray, draw=black, line width=0.5pt]
        ($(1,3)-(0,\halfthick)$) rectangle ($(4.1,3)+(0,\halfthick)$);
    \filldraw[ fill=lightgray, draw=black, line width=0.5pt]
        ($(1,4)-(0,\halfthick)$) rectangle ($(5.1,4)+(0,\halfthick)$);
    \filldraw[ fill=cyan, draw=black, line width=0.5pt]
        ($(2,5)-(0,\halfthick)$) rectangle ($(3.1,5)+(0,\halfthick)$);
    \filldraw[ fill=lightgray, draw=black, line width=0.5pt]
        ($(2,6)-(0,\halfthick)$) rectangle ($(4.1,6)+(0,\halfthick)$);
    \filldraw[ fill=lightgray, draw=black, line width=0.5pt]
        ($(2,7)-(0,\halfthick)$) rectangle ($(5.1,7)+(0,\halfthick)$);
    \filldraw[ fill=lightgray, draw=black, line width=0.5pt]
        ($(3,8)-(0,\halfthick)$) rectangle ($(4.1,8)+(0,\halfthick)$);
    \filldraw[ fill=red!70!white, draw=black, line width=0.5pt]
        ($(3,9)-(0,\halfthick)$) rectangle ($(5.1,9)+(0,\halfthick)$);
    \filldraw[ fill=cyan, draw=black, line width=0.5pt]
        ($(4,10)-(0,\halfthick)$) rectangle ($(5.1,10)+(0,\halfthick)$);

    \node[black,right] at (2.1,1) {$[1,2]$};
    \node[black,right] at (3.1,2) {$[1,3]$};
    \node[black,right] at (4.1,3) {$[1,4]$};
    \node[black,right] at (5.1,4) {$[1,5]$};
    \node[black,right] at (3.1,5) {$[2,3]$};
    \node[black,right] at (4.1,6) {$[2,4]$};
    \node[black,right] at (5.1,7) {$[2,5]$};
    \node[black,right] at (4.1,8) {$[3,4]$};
    \node[black,right] at (5.1,9) {$[3,5]$};
    \node[black,right] at (5.1,10) {$[4,5]$};

  \draw[->] (0.5,0) -- (\n+0.5,0) node[right] {$t$};
  \foreach \k in {1,...,\n}
    \draw (\k,0) -- (\k,-0.1) node[below] {\(\k\)};
\end{tikzpicture}
\caption{Interval representation of $\mathcal{I}_5$.}\label{fig:intervals}
\end{subfigure}
    \begin{subfigure}{0.48\textwidth}
        \centering
        \begin{tikzpicture}[scale=0.8]
\node[draw, circle, inner sep=1pt, fill=red!70!white] (v1) at (0.00:3cm) {$[1,2]$};
\node[draw, circle, inner sep=1pt, fill=lightgray] (v2) at (36.00:3cm) {$[1,3]$};
\node[draw, circle, inner sep=1pt, fill=lightgray] (v3) at (72.00:3cm) {$[1,4]$};
\node[draw, circle, inner sep=1pt, fill=lightgray] (v4) at (108.00:3cm) {$[1,5]$};
\node[draw, circle, inner sep=1pt, fill=cyan] (v5) at (144.00:3cm) {$[2,3]$};
\node[draw, circle, inner sep=1pt, fill=lightgray] (v6) at (180.00:3cm) {$[2,4]$};
\node[draw, circle, inner sep=1pt, fill=lightgray] (v7) at (216.00:3cm) {$[2,5]$};
\node[draw, circle, inner sep=1pt, fill=lightgray] (v8) at (252.00:3cm) {$[3,4]$};
\node[draw, circle, inner sep=1pt, fill=red!70!white] (v9) at (288.00:3cm) {$[3,5]$};
\node[draw, circle, inner sep=1pt, fill=cyan] (v10) at (324.00:3cm) {$[4,5]$};

\draw (v1) -- (v2);
\draw (v1) -- (v3);
\draw (v1) -- (v4);
\draw (v1) -- (v5);
\draw (v1) -- (v6);
\draw (v1) -- (v7);
\draw (v2) -- (v3);
\draw (v2) -- (v4);
\draw (v2) -- (v5);
\draw (v2) -- (v6);
\draw (v2) -- (v7);
\draw (v2) -- (v8);
\draw (v2) -- (v9);
\draw (v3) -- (v4);
\draw (v3) -- (v5);
\draw (v3) -- (v6);
\draw (v3) -- (v7);
\draw (v3) -- (v8);
\draw (v3) -- (v9);
\draw (v3) -- (v10);
\draw (v4) -- (v5);
\draw (v4) -- (v6);
\draw (v4) -- (v7);
\draw (v4) -- (v8);
\draw (v4) -- (v9);
\draw (v4) -- (v10);
\draw (v5) -- (v6);
\draw (v5) -- (v7);
\draw (v5) -- (v8);
\draw (v5) -- (v9);
\draw (v6) -- (v7);
\draw (v6) -- (v8);
\draw (v6) -- (v9);
\draw (v6) -- (v10);
\draw (v7) -- (v8);
\draw (v7) -- (v9);
\draw (v7) -- (v10);
\draw (v8) -- (v9);
\draw (v8) -- (v10);
\draw (v9) -- (v10);
    \end{tikzpicture}
    \caption{Graph representation of $\mathcal{I}_5$.}\label{fig:interval_graph}
\end{subfigure}
\caption{Illustration of the interval graph $\mathcal{I}_5$ in two different representations, where the only two independent sets of size larger than $1$ are depicted, one in red and one in cyan.}\label{fig:I_5}
\end{figure}
An example is illustrated in~\Cref{fig:I_5}. Naturally, the independent sets of $\mathcal{I}_n$ correspond to sets of intervals that do not intersect, and thus for which tasks with competing resources can be all scheduled.
Note that $\mathcal{I}_n$ has $\Omega(n^4)$ edges, as for each subset $\{i, j, k, \ell\} \subseteq \{1, \ldots, n\}$ with $i < j < k < \ell$ we have an edge between vertices $[i, k]$ and $[j, \ell]$. Therefore, a direct encoding of the independent-set property is very large even for small values of $n$. We will prove that this can be drastically improved to $\O(n^2 \lg n)$, but first we analyze how well clique coverings do for this family of graphs.

\begin{proposition}\label{prop:cliques_for_intervals}
    There is a CC-ISP encoding for $\mathcal{I}_n$ using $\O(n^3)$ clauses, and no CC-ISP encoding can be asymptotically more compact.
\end{proposition}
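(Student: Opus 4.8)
The plan is to prove the two halves of the proposition separately: first exhibit a clique covering of $\mathcal{I}_n$ of total weight $\O(n^3)$, which by $|F_\mathcal{C}| = \sum_{C \in \mathcal{C}} f(|V(C)|) = \O\!\big(\sum_{C}|V(C)|\big)$ yields a CC-ISP encoding of that size; and then prove that \emph{every} clique covering of $\mathcal{I}_n$ has total weight $\Omega(n^3)$, which since $f(m) = \Theta(m)$ for $m \ge 2$ (and $f(m) = 0$ for $m \le 1$) shows that every CC-ISP encoding has $\Omega(n^3)$ clauses.

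\emph{Upper bound.} I would use ``point cliques'': for each $p \in \{1, \dots, n\}$ set $C_p := \{\,[i,j] \in V(\mathcal{I}_n) : i \le p \le j\,\}$. All intervals in $C_p$ contain the point $p$, so $C_p$ induces a clique (by the one-dimensional Helly property these are in fact precisely the maximal cliques of $\mathcal{I}_n$, though I only need the easy direction here). The family $\{C_1, \dots, C_n\}$ is a clique covering: an edge of $\mathcal{I}_n$ joins two distinct intersecting intervals $[i,j]$ and $[k,\ell]$, and then $p := \max(i,k)$ satisfies $i, k \le p \le \min(j, \ell)$, so both intervals lie in $C_p$. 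Since $|V(C_p)| \le p\,(n - p + 1) \le n^2$, the covering has total weight $\sum_{p} |V(C_p)| = \O(n^3)$, so the associated CC-ISP encoding has $\O(n^3)$ clauses.

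\emph{Lower bound.} Fix an arbitrary clique covering $\mathcal{C}$ of $\mathcal{I}_n$; cliques with at most one vertex cover no edge and contribute no clauses, so I may assume every $C \in \mathcal{C}$ has $|V(C)| \ge 2$, whence $|F_\mathcal{C}| = \Theta\!\big(\sum_{C}|V(C)|\big) = \Theta\!\big(\sum_{v \in V(\mathcal{I}_n)} \#\{C \in \mathcal{C} : v \in V(C)\}\big)$. It thus suffices to show that each vertex $v = [i,j]$ lies in $\Omega(j-i)$ cliques of $\mathcal{C}$, because $\sum_{1 \le i < j \le n}(j - i) = \sum_{d=1}^{n-1} d\,(n-d) = \Theta(n^3)$. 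So fix $v = [i,j]$ with $d := j - i \ge 2$ and consider the $k := \lfloor (d+1)/2 \rfloor \ge d/2$ pairwise disjoint unit intervals $u_\ell := [\,i + 2\ell - 2,\ i + 2\ell - 1\,]$ for $1 \le \ell \le k$: each $u_\ell$ is a vertex of $\mathcal{I}_n$ contained in $[i,j]$ and distinct from $v$, hence adjacent to $v$, whereas $u_\ell$ and $u_{\ell'}$ are \emph{non}-adjacent for $\ell \ne \ell'$ since the intervals are disjoint. The edge $\{v, u_\ell\}$ is covered by some clique $C_\ell \in \mathcal{C}$, and these cliques are pairwise distinct: if $C_\ell = C_{\ell'}$, then $C_\ell$ would be a clique containing the non-adjacent vertices $u_\ell, u_{\ell'}$. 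Hence $v$ belongs to at least $k \ge d/2$ cliques of $\mathcal{C}$, and summing over all $v$ gives $\sum_{C}|V(C)| = \Omega(n^3)$, so $|F_\mathcal{C}| = \Omega(n^3)$.

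The main obstacle is the lower bound. The crude counting bound --- each clique covers at most $\binom{|V(C)|}{2}$ edges, $\mathcal{I}_n$ has $\Theta(n^4)$ edges, and its largest clique has only $\Theta(n^2)$ vertices --- yields merely $\Omega(n^2)$, so one genuinely needs a per-vertex argument; the key idea is that inside a ``long'' interval $v$ one can plant $\Omega(|v|)$ pairwise non-adjacent neighbors, forcing $v$ into that many distinct cliques. Everything else (the estimate $\sum_d d(n-d) = \Theta(n^3)$, the bound $|V(C_p)| \le n^2$, and $f(m) = \Theta(m)$ for $m \ge 2$) is routine.
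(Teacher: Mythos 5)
Your proof is correct and follows essentially the same route as the paper's: the upper bound via the point cliques $C_p$ of all intervals containing a fixed position $p$, and the lower bound by planting $\Omega(j-i)$ pairwise disjoint unit subintervals inside each vertex $[i,j]$, forcing it into that many distinct cliques, then summing $\sum_{d} d(n-d) = \Theta(n^3)$. The only differences are cosmetic (indexing of the unit intervals and the explicit handling of $f(m)=\Theta(m)$).
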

\begin{proof}
    First, we note that for every $k \in \{2, \ldots, n-1\}$, all the intervals $[i, j] \in V(\mathcal{I}_n)$ with $i \leq k \leq j$ intersect, thus forming a clique that we denote $K_{{\cap}k}$. Then, observe that the collection of cliques $K_{{\cap}k}$, for $2 \leq k \leq n-1$, is a clique covering of $\mathcal{I}_n$. Indeed, any edge $e = ([i, j], [a, b])$ must either have $a \leq j \leq b$, in which case $e$ is covered by $K_{{\cap} j}$, or $i \leq a \leq j$, in which case $e$ is covered by $K_{{\cap} a}$.
    Each clique $K_{{\cap} k}$ has $\O(n^2)$ vertices (there are $\O(n^2)$ vertices in the entire $\mathcal{I}_n$), and we thus this clique covering results in  $\sum_{k=2}^{n-1} |\amope(V(K_{{\cap} k} ))| = \O(n^3)$ many clauses. For the lower bound, consider an arbitrary clique covering $\mathcal{C}$ of $\mathcal{I}_n$. 
    Then, observe that each interval $x := [i, j]$ is adjacent to all the intervals $[i + 2t, i + 2t+1]$ for $t \in \{0, \ldots, \lfloor (j - i - 1)/2\rfloor\}$. Moreover, all these intervals $[i + 2t, i + 2t+1]$ are pairwise disjoint, which implies that for each $t$, the edge $\{x, [i + 2t, i + 2t+1]\}$ must be covered by a different clique $C_t \in \mathcal{C}$. 
    Thus, we have that $|\{C \in \mathcal{C} : x \in C\}| \geq \lfloor (j - i - 1)/2\rfloor + 1 \geq (j - i)/3$. We can now conclude since 
    \begin{align*}
    \sum_{C \in \mathcal{C}} |\amope(C)|  \geq \sum_{C \in \mathcal{C}} |C| = \sum_{x \in V(\mathcal{I}_n)} |\{C \in \mathcal{C} : x \in C\}| \geq \sum_{i = 1}^{n} \sum_{j = i+1}^{n} (j-i)/3,
    \end{align*}
    from where the change of variables $d := j-i$ yields
    \[
    \sum_{i = 1}^{n} \sum_{j = i+1}^{n} (j-i)/3 = \sum_{i = 1}^{n} \sum_{d = 1}^{n-i} d/3 
    \geq  \frac{1}{3}\sum_{i = \lfloor n/2\rfloor}^{n} \sum_{d = 1}^{\lfloor n/2 \rfloor} d  = \frac{1}{3}\lceil n/2 \rceil \cdot \frac{\lfloor n/2 \rfloor (\lfloor n/2\rfloor + 1)}{2} = \Omega(n^3).\qedhere
    \]
\end{proof}
\subsection{The Interval Propagation Trick}\label{subsec:ipt}
The proof of~\Cref{thm:intervals}, the main result of this section, is quite technical, and it is worth isolating one of its ingredients which might be of independent interest.
Consider the following encoding problem: 
\begin{center}
\emph{We have variables $x_{i, j}$, representing that an interval $[i, j]$ is ``selected'', for $1 \leq i < j \leq n$, and also variables $t_\ell$, for $1 \leq \ell \leq n$, whose intended semantics are that $t_\ell$ is true if and only if the index $\ell$ is contained in some selected interval. The problem is how to efficiently encode this relationship between the $x_{i, j}$ and $t_\ell$ variables, without enforcing any other conditions on either the $x$- or $t$-variables.} 
\end{center}
For example, if $x_{2, 4}$ and $x_{7, 9}$ are the only $x$-variables assigned to $\top$, then $\{t_2, t_3, t_4, t_7, t_8, t_9\}$ should be assigned to $\top$, and every other $t_\ell$ variable to $\bot$. 
The fact that $t_\ell$ implies that some interval containing $\ell$ is selected is trivial to encode, by just adding the $\O(n)$ following clauses:
\[
\overline{t_\ell} \lor \bigvee_{[i, j] \supseteq \{\ell\}} x_{i, j}, \quad \forall 1 \leq \ell \leq n.
\]
The other direction admits a nice trick.
The naïve way of encoding the implication from the $x$-variables toward the $t$-variables is to simply add clauses of the form 
\(
(\overline{x_{i, j}} \lor t_\ell), 
\)
for every $1 \leq i < j \leq n$ and every $i \leq \ell \leq j$, which amounts to 
\(
 \sum_{i = 1}^{n} \sum_{j = i+1}^{n} (j-i) = \Omega(n^3)
\)
many clauses, by the same analysis of the sum used in the proof of~\Cref{prop:cliques_for_intervals}.
It turns out, however, that we can achieve this with $\O(n^2)$ many clauses, using what we denote the \emph{``interval propagation trick''}.
First, we create variables $z_{i, j}$ for each $1 \leq i < j \leq n$, and then add the following clauses:
\begin{enumerate}
    \item $\overline{x_{i, j}} \lor z_{i, j}$, for every $1 \leq i < j \leq n$.
    \item $(\overline{z_{i, i+1}} \lor t_i)$ and $(\overline{z_{i, i+1}} \lor t_{i+1})$, for every $1 \leq i < n$.
    \item $(\overline{z_{i, j}} \lor z_{i+1, j})$ and $(\overline{z_{i, j}} \lor z_{i, j-1})$, for every $1 \leq i < j-1 < n$.
    \item $(\overline{z_{i, j}} \lor x_{i, j} \lor z_{i-1, j} \lor z_{i, j+1})$, for every $1 \leq i < j \leq n$, and removing the non-sensical literal $z_{i-1, j}$ when $i=1$, and $z_{i, j+1}$ when $j = n$.
\end{enumerate}

To formalize correctness, let us denote by $\textsf{NIP}_n$ the formula resulting from the aforementioned clauses in the naïve encoding (i.e., of the forms $\overline{t_\ell} \lor \bigvee_{[i, j] \supseteq \{\ell\}} x_{i, j}$ and $(\overline{x_{i, j}} \lor t_\ell)$), and 
$\textsf{IPT}_n$ the formula resulting from the clauses of the form $\overline{t_\ell} \lor \bigvee_{[i, j] \supseteq \{\ell\}} x_{i, j}$ together with the clauses of types (1-4) above. Note that $|\textsf{IPT}_n| \leq 6n^2 = \O(n^2)$, and let us now state the desired form of ``equivalence'' between these formulations.
\begin{restatable}{proposition}{iptcorrectness}
\label{prop:ipt_correctness}
    Let $\tau : \textsf{var}(\textsf{NIP}_n) \to \{\bot, \top\}$ and assignment. Then, we have that 
    \[
    \tau \models \textsf{NIP}_n \iff \textsf{SAT}(\textsf{IPT}_n|_\tau),
    \]
    and moreover, any satisfying assignment $\theta$ for $\textsf{IPT}_n|_\tau$ must assign \( \theta(z_{a, b}) = \top \) if and only if there is some $[i, j]$ such that $\tau(x_{i, j}) = \top$ and $[a, b] \subseteq [i, j]$ 
\end{restatable}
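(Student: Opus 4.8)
The plan is to pin down the auxiliary variables by a single \emph{canonical} assignment and then verify things clause by clause. Given $\tau$, define $\zeta_\tau$ on the $z$-variables by setting $\zeta_\tau(z_{a,b}) = \top$ precisely when there exists $[i,j]$ with $\tau(x_{i,j}) = \top$ and $[a,b] \subseteq [i,j]$; this is exactly the assignment the ``moreover'' clause claims must be forced. I will isolate two combinatorial facts. First, a \emph{shrinking lemma}: if an assignment $\theta$ satisfies the type-(3) clauses and $\theta(z_{i,j}) = \top$, then $\theta(z_{a,b}) = \top$ for every subinterval $[a,b] \subseteq [i,j]$ with $a < b$; this follows by induction on $(a-i) + (j-b)$, peeling one endpoint at a time with $\overline{z_{c,d}} \lor z_{c+1,d}$ or $\overline{z_{c,d}} \lor z_{c,d-1}$ (the base case $[a,b]=[i,j]$ is trivial, and the degenerate case $j = i+1$ never arises since then there is no smaller subinterval). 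Dually, a \emph{growing observation} from the type-(4) clauses: if $\theta$ satisfies them, $\theta(z_{a,b}) = \top$, and $\tau(x_{a,b}) = \bot$, then $\theta$ makes one of $z_{a-1,b}$, $z_{a,b+1}$ true (whichever is present), so the ``true $z$-region'' can always be enlarged until it contains a selected interval; this process terminates because the potential $(a-1)+(n-b)$ strictly decreases at each step and, upon reaching $0$, the clause $\overline{z_{1,n}} \lor x_{1,n}$ forces $\tau(x_{1,n}) = \top$.

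For the forward implication $\tau \models \textsf{NIP}_n \Rightarrow \textsf{SAT}(\textsf{IPT}_n|_\tau)$, I will show that $\tau \cup \zeta_\tau \models \textsf{IPT}_n$ by inspecting each clause family. The backward clauses are shared, so $\tau$ satisfies them. Type-(1) clauses hold because $\tau(x_{i,j})=\top$ gives $[i,j]\subseteq[i,j]$, hence $\zeta_\tau(z_{i,j})=\top$. Type-(3) clauses hold since a subinterval of a subinterval of a selected interval is itself a subinterval of a selected interval. For type-(2): if $\zeta_\tau(z_{\ell,\ell+1})=\top$ there is a selected $[a,b]\supseteq[\ell,\ell+1]$ with $a\le\ell<\ell+1\le b$, and since $\tau\models\textsf{NIP}_n$ the forward clauses $\overline{x_{a,b}}\lor t_\ell$ and $\overline{x_{a,b}}\lor t_{\ell+1}$ force $\tau(t_\ell)=\tau(t_{\ell+1})=\top$. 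For type-(4): if $\zeta_\tau(z_{i,j})=\top$ but $\tau(x_{i,j})=\bot$, the witnessing selected interval strictly contains $[i,j]$, so it contains $[i-1,j]$ or $[i,j+1]$, giving $\zeta_\tau(z_{i-1,j})=\top$ or $\zeta_\tau(z_{i,j+1})=\top$ — and the literals removed at the boundary are exactly the cases $i=1$, $j=n$ in which that enlargement is impossible.

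For the converse $\textsf{SAT}(\textsf{IPT}_n|_\tau) \Rightarrow \tau \models \textsf{NIP}_n$, fix a satisfying $\theta$ and write $\hat\theta = \tau \cup \theta$. The backward clauses of $\textsf{NIP}_n$ lie in $\textsf{IPT}_n$, so $\tau$ satisfies them. For a forward clause $\overline{x_{i,j}}\lor t_\ell$ with $\tau(x_{i,j})=\top$ and $i\le\ell\le j$: type-(1) gives $\hat\theta(z_{i,j})=\top$; the shrinking lemma gives $\hat\theta(z_{\ell,\ell+1})=\top$ when $\ell<j$, and $\hat\theta(z_{j-1,j})=\top$ when $\ell=j$ (valid since $j\ge i+1\ge 2$); and in either case a type-(2) clause forces $\tau(t_\ell)=\top$. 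Hence $\tau\models\textsf{NIP}_n$. Finally, for the ``moreover'' claim about an arbitrary satisfying $\theta$: if $[a,b]\subseteq[i,j]$ for some $[i,j]$ with $\tau(x_{i,j})=\top$, then type-(1) plus the shrinking lemma give $\hat\theta(z_{a,b})=\top$; conversely, if $\hat\theta(z_{a,b})=\top$, the growing observation (iterating type-(4) clauses while $x$ stays false, terminating by the potential argument) produces $[i,j]\supseteq[a,b]$ with $\tau(x_{i,j})=\top$.

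I expect the main difficulty to be bookkeeping rather than conceptual: getting the two inductions and their boundary conditions exactly right, and keeping straight that the ``moreover'' part concerns \emph{every} satisfying assignment of $\textsf{IPT}_n|_\tau$, not merely $\zeta_\tau$ — it is precisely this that forces the type-(4) clauses into the argument (they bound the $z$-variables from above, so that types (1)--(4) together leave no slack), whereas the first direction only ever needed types (1)--(3).
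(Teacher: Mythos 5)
Your proposal is correct and follows essentially the same route as the paper's proof: the canonical assignment $\zeta_\tau$ witnesses the forward direction, the ``shrinking'' induction on $(a-i)+(j-b)$ over the type-(3) clauses and the ``growing'' argument over the type-(4) clauses with a decreasing potential are exactly the paper's two inductions for the ``moreover'' claim, and the type-(1)/(2) clauses transfer selected intervals to the $t$-variables for the converse. The only differences are organizational (you reuse the shrinking lemma to get the converse instead of a separate induction on $j-i$, and you spell out the clause-by-clause check that the paper leaves as ``easily checked''), which is fine.
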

The proof of~\Cref{prop:ipt_correctness} is a rather tedious induction, and thus we defer it to~\Cref{sec:appendix}.

\subsection{\texorpdfstring{Better Encodings for $\mathcal{I}_n$}{Better Encodings for In}}
Our next result, which uses only $\O(n^2 \lg n)$ many clauses, requires a more careful encoding, which starts by decomposing $[1, n]$ into blocks of size at most $b$, which we do by assigning to each position $1 \leq i \leq n$ a block number $B(i) := \lceil i/b\rceil$. For now we will keep $b$ as a parameter, and denote by $k := \lceil n/b\rceil$ the number of blocks. 
We can characterize the different edges of $\mathcal{I}_n$ in terms of the blocks of their vertices, as the following lemma states.
\begin{restatable}{lemma}{edgecases}\label{lemma:edge_cases}
    Each edge $e := \{[i_1, j_1], [i_2, j_2]\} \in E(\mathcal{I}_n)$ with $i_1 \leq i_2$ must be part of exactly one of the following cases:
    \begin{enumerate}
        \item $B(i_1) = B(i_2)$ and $B(j_1) = B(j_2)$, and $i_2 \leq j_1$, in which case we say $e$ is an $x$-\emph{edge}.
        \item $B(i_1) < B(i_2) < B(j_1)$, in which case we say $e$ is a $y$-\emph{edge}.
        \item $B(i_1) = B(i_2)$ and $i_2 \leq j_1$, but $B(j_1) \neq B(j_2)$, in which case we say $e$ is an $s$-\emph{edge}.
        \item $B(i_1) < B(i_2) = B(j_1) = B(j_2)$ and $i_2 \leq j_1$, in which case we say $e$ is an $f$-\emph{edge}.
        \item $B(i_1) < B(i_2) = B(j_1) \neq B(j_2)$, and $i_2 \leq j_1$, in which case we say $e$ is an $m$-\emph{edge}.
    \end{enumerate}
    Moreover, any tuple $(i_1, j_1, i_2, j_2)$ with $i_1 \leq i_2$ that satisfies one of these cases implies $\{[i_1, j_1], [i_2, j_2]\} \in E(\mathcal{I}_n)$.
\end{restatable}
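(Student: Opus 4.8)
The plan is to reduce the whole statement to the elementary fact that two intervals $[i_1, j_1]$ and $[i_2, j_2]$ with $i_1 \le i_2$ intersect precisely when $i_2 \le j_1$, and then to organize the classification as a decision tree on the (in)equalities among the four block values $B(i_1), B(i_2), B(j_1), B(j_2)$. Since the intervals $[i_1,j_1],[i_2,j_2]$ are vertices of $\mathcal{I}_n$, we may assume $1 \le i_1 < j_1 \le n$ and $1 \le i_2 < j_2 \le n$ throughout.

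First I would record the monotonicity facts we need. Because $B(x) = \lceil x/b\rceil$ is nondecreasing, $i_1 \le i_2$ gives $B(i_1) \le B(i_2)$; for an edge we also have $i_2 \le j_1$, whence $B(i_2) \le B(j_1)$; and $i_2 < j_2$ gives $B(i_2) \le B(j_2)$. So for any edge we automatically have the chain $B(i_1) \le B(i_2) \le B(j_1)$ together with $B(i_2) \le B(j_2)$, while $B(j_1)$ and $B(j_2)$ are otherwise incomparable a priori. I would also note the mild converse that $B(a) < B(b)$ forces $a < b$ (if $a \ge b$ then $B(a) \ge B(b)$), which is exactly what is needed to handle the $y$-edge case below.

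Next I would carry out the case split on an edge. Condition first on whether $B(i_1) = B(i_2)$ or $B(i_1) < B(i_2)$. In the first branch, split further on whether $B(j_1) = B(j_2)$: equality is case~1 (the $x$-edge) and inequality is case~3 (the $s$-edge), with the edge condition $i_2 \le j_1$ kept as part of the description in both. In the second branch we have $B(i_1) < B(i_2) \le B(j_1)$; split on whether $B(i_2) < B(j_1)$, which is case~2 (the $y$-edge; here $B(i_2) < B(j_1)$ already implies $i_2 \le j_1$ by the observation above, so the edge condition is automatic), or $B(i_2) = B(j_1)$, in which case a final split on whether $B(j_1) = B(j_2)$ gives case~4 (the $f$-edge) or case~5 (the $m$-edge). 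Every split is an exhaustive dichotomy, so each edge lands in exactly one leaf; and since the five leaves are distinguished precisely by the triple of ``strict-versus-equal'' comparisons $(B(i_1)\,?\,B(i_2))$, $(B(i_2)\,?\,B(j_1))$, $(B(j_1)\,?\,B(j_2))$, they are pairwise disjoint. For the ``moreover'' direction I would check the converse: each of the five stated descriptions forces $i_2 \le j_1$ — cases~1, 3, 4, 5 state it outright, and case~2 obtains it from $B(i_2) < B(j_1)$ — so in every case $[i_1,j_1]$ and $[i_2,j_2]$ intersect and $\{[i_1,j_1],[i_2,j_2]\} \in E(\mathcal{I}_n)$.

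I expect no genuine obstacle: the statement is a bookkeeping exercise. The only thing requiring care is to present the case analysis as the decision tree above, so that exhaustiveness and pairwise disjointness are manifest rather than needing a separate verification, and to remember to invoke $B(a) < B(b) \Rightarrow a < b$ in the one spot (the $y$-edge) where it is needed.
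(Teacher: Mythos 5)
Your proposal is correct and follows essentially the same route as the paper: the paper also proves the converse direction by noting that all cases except the $y$-edge explicitly require $i_2 \leq j_1$ while case~2 gets it from $B(i_2) < B(j_1) \Rightarrow i_2 < j_1$, and establishes exhaustiveness/disjointness via exactly the decision tree on block comparisons that you describe.
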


An illustration of~\Cref{lemma:edge_cases} is provided in~\Cref{fig:edge_cases}, and the proof is just case analysis and thus deferred to~\Cref{app:code}.


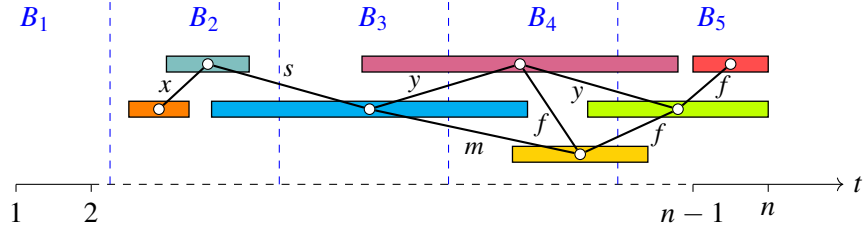
\begin{figure}
    \centering
    \begin{tikzpicture}
           \def\n{5}
           \def\secondheight{0.4}

           \def\downheight{0.6}
         \draw[-] (2,0-\downheight) -- (3,0-\downheight); 
         \draw[-, dashed] (3, 0-\downheight) -- (\n + 6, 0-\downheight); 
         \draw[->] (\n+6, 0-\downheight) -- (\n + 8, 0-\downheight) node[right] {$t$};
    \foreach \k in {1,...,2}
            \draw (\k+1,0-\downheight) -- (\k+1,-0.1-\downheight) node[below] {\(\k\)};
        
    \draw (12,0-\downheight) -- (12,-0.1-\downheight) node[below] {\(n\)};
    \draw (11,0-\downheight) -- (11,-0.1-\downheight) node[below] {\(n-1\)};

     \foreach \k in {1,...,4}
            \draw[blue, dashed] (1+2.25*\k,0-\downheight) -- (1+2.25*\k,2.5-\downheight); 
    
     \foreach \k in {1,...,5}
        \node[blue] at ( 2.25*\k, 2.2-\downheight) {\(B_\k\)};

  \setlength{\thicknesss}{6pt}    

  \pgfmathsetlengthmacro{\halfthick}{\thicknesss/2}

    \filldraw[ fill=teal!50!white, draw=black, line width=0.5pt]
        ($(4,1)-(0,\halfthick)$) rectangle ($(5.1,1)+(0,\halfthick)$);

      \filldraw[ fill=orange, draw=black, line width=0.5pt]
        ($(3.5,\secondheight)-(0,\halfthick)$) rectangle ($(4.3,\secondheight)+(0,\halfthick)$);

          \filldraw[ cyan, draw=black, line width=0.5pt]
        ($(4.6,\secondheight)-(0,\halfthick)$) rectangle ($(8.8,\secondheight)+(0,\halfthick)$);

                \filldraw[ purple!60!white, draw=black, line width=0.5pt]
        ($(6.6,1.0)-(0,\halfthick)$) rectangle ($(10.8,1.0)+(0,\halfthick)$);

                       \filldraw[red!70!white, draw=black, line width=0.5pt]
        ($(11, 1.0)-(0,\halfthick)$) rectangle ($(12,1.0)+(0,\halfthick)$);

                       \filldraw[lime, draw=black, line width=0.5pt]
        ($(9.6,\secondheight)-(0,\halfthick)$) rectangle ($(12,\secondheight)+(0,\halfthick)$);

                       \filldraw[ACMYellow, draw=black, line width=0.5pt]
        ($(8.6,\secondheight-0.6)-(0,\halfthick)$) rectangle ($(10.4,\secondheight-0.6)+(0,\halfthick)$);

        \node[draw, fill=white, circle, inner sep=0pt, minimum size=4pt] (teal_i) at (4.55, 1) {};
        \node[draw, fill=white, circle, inner sep=0pt, minimum size=4pt] (orange_i) at (3.9, \secondheight) {};
        \node[draw, fill=white, circle, inner sep=0pt, minimum size=4pt] (cyan_i) at (6.7, \secondheight) {};
        \node[draw, fill=white, circle, inner sep=0pt, minimum size=4pt] (purple_i) at (8.7, 1) {};
        \node[draw, fill=white, circle, inner sep=0pt, minimum size=4pt] (red_i) at (11.5, 1) {};
        \node[draw, fill=white, circle, inner sep=0pt, minimum size=4pt] (lime_i) at (10.8, \secondheight) {};
        \node[draw, fill=white, circle, inner sep=0pt, minimum size=4pt] (yellow_i) at (9.5, \secondheight-0.6) {};


        \draw[thick] (teal_i) -- (orange_i) node[midway, left] {\small \(x\)};
        \draw[thick] (teal_i) -- (cyan_i) node[midway, above] {\small \(s\)};
        \draw[thick] (cyan_i) -- (purple_i) node[midway, right, xshift=-18pt, yshift=1pt] {\small \(y\)};
        \draw[thick] (purple_i) -- (lime_i) node[midway,  below, xshift=-8pt, yshift=4pt] {\small \(y\)};
        \draw[thick] (lime_i) -- (red_i) node[midway, right] {\small \(f\)};

        \draw[thick] (cyan_i) -- (yellow_i) node[midway, below] {\small \(m\)};
        \draw[thick] (purple_i) -- (yellow_i) node[midway, left, yshift=-6pt, xshift=4pt] {\small \(f\)};
        \draw[thick] (lime_i) -- (yellow_i) node[midway, right, xshift=4pt, yshift=-1pt] {\small \(f\)};
    \end{tikzpicture}
    \caption{Illustration of~\Cref{lemma:edge_cases}. The type of each edge is indicated by its label, and blocks are separated by dashed blue lines.}\label{fig:edge_cases}
\end{figure}


\begin{theorem}\label{thm:intervals_formal}
    The independent-set property of $\mathcal{I}_n$ can be encoded using at most $26 n^2 \lg n$ clauses.
\end{theorem}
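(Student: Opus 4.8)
My plan is to combine the block decomposition used for \Cref{prop:cliques_for_intervals} with the interval propagation trick, possibly recursing inside the blocks. First I would fix a block size $b$ (a small constant suffices), partition $\{1,\dots,n\}$ into the $k=\lceil n/b\rceil$ blocks $B_1,\dots,B_k$ with $B(i)=\lceil i/b\rceil$, and split $E(\mathcal{I}_n)$ into the five types of \Cref{lemma:edge_cases}. The $x$-edges are exactly the edges whose endpoints lie in a common \emph{cell} $(p,q)$---both left endpoints in $B_p$, both right endpoints in $B_q$---so it is enough to encode the subgraph induced on each cell. A diagonal cell $(p,p)$ is a copy of $\mathcal{I}_b$ on the positions of $B_p$, which I would encode recursively (there are $k$ such cells); an off-diagonal cell $(p,q)$ with $p<q$ is a \emph{clique}, since every interval ending in $B_q$ contains all of $B_p$ and hence meets every interval starting in $B_p$, so I would encode it with $\amope$ over its $\O(b^2)$ vertices, for $\O(k^2b^2)=\O(n^2)$ clauses over the $\binom{k}{2}$ off-diagonal cells.

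The four ``boundary'' types ($y$-, $s$-, $f$- and $m$-edges) cannot be enumerated, as there are $\Omega(n^4)$ of them, so I would capture them through auxiliary variables. For each block $B_m$ I would introduce a constant number of \emph{summary} variables recording how the selected intervals meet $B_m$ from outside: e.g.\ ``some selected interval covers all of $B_m$'', ``some selected interval starts inside $B_m$ and ends in a strictly later block'', their left/right mirror images, and the analogous statements for contiguous \emph{ranges} of blocks. The range-level summaries would be organized along a segment tree (a balanced, hierarchical decomposition) over $B_1,\dots,B_k$: the span of any interval splits into $\O(\lg k)$ canonical block-ranges, each contributing $\O(1)$ linking clauses, and this is exactly where the logarithmic factor enters. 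All summaries would then be forced to their intended values by propagating information at $\O(1)$ clauses per step---block-to-block for the point-level summaries, in the spirit of \Cref{prop:ipt_correctness}, and child-to-parent along the segment tree for the range-level ones---adding only $\O(n)$ clauses beyond the $\O(n^2\lg n)$ incidence clauses. With the summaries in place, every $y$-, $s$-, $f$- or $m$-edge conflict reduces to a clash \emph{inside a single block} $B_m$ between an interval genuinely contained in $B_m$ and a constant number of ``virtual'' intervals read off from $B_m$'s summaries; that is an interval-disjointness instance of size $\O(b)$, which I would encode directly in $\O(b^2)$ clauses per block (or fold into an enlarged recursive call if $b$ is superconstant).

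What would remain is to argue correctness in the sense of \Cref{def:isp_encoding}: that for every assignment $\tau$ of the primary variables $x_{i,j}$ the residual formula is satisfiable exactly when the selected intervals are pairwise disjoint. The forward direction should be routine---given a disjoint selection, set each summary variable to the value its intended semantics dictates, and check each clause type. The converse is where I expect the real difficulty: one must show that the summary and segment-tree variables are \emph{forced} to their intended values, and then that a conflict of each of the five types always leaves some clause falsified; this is a tedious induction in the style of \Cref{prop:ipt_correctness}, here nested inside the block recursion, with the additional care that the auxiliary machinery never over-constrains the $x$-variables. The explicit constant $26$ would then follow by bookkeeping: summing the clause counts---at most a few clauses per (interval, canonical block-range) incidence, over at most $\binom{n}{2}$ intervals with at most $\lg n$ ranges each, plus the $\O(n^2)$ clique clauses and the $\O(n)$ propagation and per-block boundary clauses---and checking that the total stays below $26\,n^2\lg n$.
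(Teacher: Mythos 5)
Your architecture---constant-size blocks plus a segment tree of canonical block-ranges---is genuinely different from the paper's, which instead takes only $k=\lfloor\lg n\rfloor$ blocks of size $b=n/\lfloor\lg n\rfloor$, recurses both inside blocks and block-pairs (for $x$-edges) and on the block-level graph $\mathcal{I}^0_k$ (for $y$-edges), and handles the $s$-, $f$- and $m$-edges with variables $s_{i,r}$, $f_{\ell,j}$ indexed by a \emph{position and an end-block}, using explicit pairwise clauses such as \eqref{eq:intro-s} (affordable only because $k^3=\lg^3 n$ is tiny) together with the interval propagation trick. A segment-tree encoding of the kind you sketch can in fact be made to work in $\O(n^2\lg n)$ clauses, and your observation that an off-diagonal cell is a clique is correct and even simplifies the paper's treatment (which recurses on $\mathcal{I}_{2b}$ there). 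However, as written the proposal has a genuine gap at its central step.

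The gap is the claim that every $y$-, $s$-, $f$- or $m$-edge conflict ``reduces to a clash inside a single block $B_m$ between an interval genuinely contained in $B_m$ and virtual intervals read off from $B_m$'s summaries.'' Most such conflicts are between two intervals \emph{neither} of which is contained in any block, and your summaries are purely existential bits, which cannot count to two. Concretely, the intervals $[p,j_1]$ and $[p,j_2]$ with $B(j_1)\neq B(j_2)$, both starting at the same position $p$ and leaving $p$'s block to the right, form an $s$-edge of \Cref{lemma:edge_cases}, yet they witness the very same summary (``some selected interval starts inside this block and ends in a later block'') and hence the same single virtual interval, so no clause is falsified; likewise two intervals with nested block-spans that both properly cover a block $m$ set the same ``covers all of $B_m$'' bit. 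What is missing is exactly the hard part of the construction: at-most-one constraints over the \emph{groups of long intervals} registering at each block or canonical range (e.g., $\amope$ over all intervals starting in a given block and ending outside it, and over all intervals whose canonical decomposition contains a given segment-tree node), together with pairwise exclusions between nested canonical ranges. These play the role of the paper's clauses \eqref{eq:intro-s} and of its recursive $\mathcal{I}^0_k$ encoding on the $y_{\ell,r}$ variables; without them the encoding under-constrains and accepts non-independent selections. A secondary point: summaries such as ``starts inside $B_m$ and ends later'' must be indexed by the starting \emph{position}, not just the block, or the condition $i_2\le j_1$ in the $s$- and $m$-cases of \Cref{lemma:edge_cases} cannot be checked; with $b=\O(1)$ this costs nothing, but it contradicts the stated ``constant number of summary variables per block'' unless made explicit.
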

\begin{proof}
    \newcommand{\rnk}{\textsf{rnk}}
    We will prove a slightly stronger statement in order to have a stronger inductive hypothesis. Let $\mathcal{I}^0_n$ be the graph whose vertices are all the intervals $[i, j]$ for integers $1 \leq i < j \leq n$, but now with edges only between intervals whose intersection has cardinality at least $2$. That is, $\{[1, 3], [3, 5]\}$ is not an edge in $\mathcal{I}^0_5$, but it is in $\mathcal{I}_5$. 
    The proof is by (strong) induction on $n \geq 2$. The base case $n = 2$ is trivial since we can use the direct encoding then. In fact, it is easy to see that the direct encoding uses at most $3\binom{n}{4} = \frac{n^4}{8}$ clauses, and one can computationally check that for $n \leq 32$, 
    \(
        \frac{n^4}{8} \leq 26 n^2 \lg n,
    \)
    and thus the direct encoding is enough for the result. We thus assume $n > 32$ from now on.
    We now focus on the inductive case, where we will generally assume that we are encoding the independent-set property for $\mathcal{I}_n$, but indicate whenever a slight change is needed for $\mathcal{I}^0_n$, since the two cases are almost identical.
    The encoding will consider each of the four types of edges from~\Cref{lemma:edge_cases} separately.
    The base variables are $x_{i, j}$ (for $1 \leq i < j \leq n$), representing that the interval $[i, j]$ is part of the independent set.

    \begin{itemize}
        \item (\textbf{$x$-edges})  Consider a fixed choice of $B(i_1) = B(i_2) = \ell$ and $B(j_1) = B(j_2) = r$, and note that there will be $k^2$ such choices, since there are $k$ blocks in total. If $\ell = r$, then it is easy to see that the $x$-edges whose endpoints are in block $\ell$ form a graph isomorphic to $\mathcal{I}_{b}$ (resp. $\mathcal{I}^0_{b}$), and thus by inductive hypothesis they can be encoded using at most $26 b^2 \lg b$ clauses. If $\ell < r$, then we consider the graph $G_{\ell, r}$ formed by the $x$-edges whose endpoints are in blocks $\ell$ and $r$, even if they are both in $\ell$ or both in $r$. This time $G_{\ell, r}$ is isomorphic to $\mathcal{I}_{2b}$ (resp. $\mathcal{I}^0_{2b}$), and thus by the inductive hypothesis all these $x$-edges can be encoded using at most $26 (2b)^2 \lg(2b) = 104 b^2 \lg b + 104$ clauses. 
        As there are $k^2$ choices for $\ell, r$, we can encode all the $x$-edges of $\mathcal{I}_n$ using at most
        \(
        k^2 \cdot (104 b^2 \lg b + 104)
        \) clauses.
    

        \item (\textbf{$y$-edges}) We create, for each pair of block-indices $1 \leq \ell < r \leq k$, an auxiliary variable $y_{\ell, r}$ that represents that there is some interval $[i, j]$ in the independent set such that $B(i) = \ell$ and $B(j) = r$. To enforce these semantics, we add clauses 
    \begin{align}
        \overline{x_{i, j}} \lor y_{B(i), B(j)},& \quad \forall 1 \leq i < j \leq n, \label{eq:intro-y}\\
        \left(\overline{y_{\ell, r}} \lor \bigvee_{i, j : B(i) = \ell, B(j) = r} x_{i, j}\right),& \quad \forall 1 \leq \ell < r \leq k.\label{eq:intro-y2}
    \end{align}
       The key observation now is that the graph $G_y$ whose vertices are the $y_{\ell, r}$ variables and has edges between variables $y_{\ell_1, r_1}, y_{\ell_2, r_2}$ whenever $\ell_1 < \ell_2 < r_1$, is isomorphic to $\mathcal{I}^0_{k}$. 
       Therefore, by the inductive hypothesis, we can encode all the $y$-edges using $\binom{n}{2} + \binom{k}{2} + 26 k^2 \lg k$ many clauses.
      \item (\textbf{$s$-edges}) We create, for each block-index $2 \leq r \leq k$, and position $i$ such that $B(i) < r$, an auxiliary variable $s_{i, r}$ that represents that there is some interval $[i, j]$ in the independent set such that $B(j) = r$.
      We encode these semantics using clauses $(\overline{x_{i, j}} \lor s_{i, B(j)})$ and $(\overline{s_{i, r}} \lor \bigvee_{j \text{ s.t. } B(j) = r} x_{i, j})$, which amount to at most $2n^2$ clauses. 
       Then, we add clauses 
       \begin{equation}\label{eq:intro-s}
       \overline{s_{i_1, r_1}} \lor \overline{s_{i_2, r_2}}, \quad \forall i_1 \leq i_2, \text{ s.t. } B(i_1) = B(i_2), \forall r_1 > B(i_1), r_2 > B(i_1) \text{ with } r_1 \neq r_2.
       \end{equation}
        There are at most $b^2 \cdot k^3$ clauses from~\Cref{eq:intro-s}, since we need to choose $B(i_1), r_1, r_2$, for which there are $k^3$ possibilities, and then $b^2$ possibilities for $i_1, i_2$ such that $B(i_1) = B(i_2)$.
        Unfortunately, this is not enough to encode all $s$-edges, since~\Cref{eq:intro-s} misses the cases where one of the two intervals is entirely contained in one block, so either $B(i_1) = B(j_1)$ or $B(i_2) = B(j_2)$.
        The naïve solution would be to add the following clauses:
        \begin{align}
        \overline{x_{i_1, j_1}} \lor \overline{s_{i_2, r}},& \quad \forall 1 \leq i_1 \leq i_2 \leq n, \forall j_1 > i_1 \text{ s.t. } B(i_1) = B(j_1) = B(i_2), \forall r > B(i_2),\label{eq:intro-s2}\\
        \overline{x_{i_2, j_2}} \lor \overline{s_{i_1, r}},& \quad \forall 1 \leq i_1 \leq i_2 \leq n, \forall j_2 > i_2 \text{ s.t. } B(i_1) = B(i_2) = B(j_2), \forall r > B(i_1).\label{eq:intro-s3}
        \end{align}
        \Cref{eq:intro-s2} uses at most $b^3 \cdot k^2$ clauses, since we need to choose $B(i_1) = B(j_1) = B(i_2)$ and $r$, and for each such choice there are at most $b^3$ possibilities for $i_1, i_2, j_1$ within the same block.
        Analogously,~\Cref{eq:intro-s3} also incurs in at most $b^3 \cdot k^2$ clauses.
        However, it will turn out that $b^3 \cdot k^2$ clauses would be too large, since we will end up setting $k = \Theta(\lg n)$, and thus we need a slightly better way to encode~\Cref{eq:intro-s2,eq:intro-s3}.
        The solution is to use the ``interval propagation trick'' from~\Cref{subsec:ipt} independently in each block of index $1 \leq d \leq k$, thanks to which we can assume variables $t^{d}_\ell$ that represent whether some $x_{i, j}$ with $\ell in [i, j]$ is true with $B(\ell) = B(i) = B(j) = d$ using at most $6b^2$ clauses. In total over the $k$ blocks this incurs in at most $6b^2 \cdot k$ clauses.
        Now, wen can replace~\Cref{eq:intro-s2,eq:intro-s3} by 
        \begin{equation}\label{eq:intro-s2-t}
        \overline{t^{B(i)}_{\ell}} \lor \overline{s_{i, r}}, \quad \forall 1  \leq i \leq \ell \leq n,  \text{ s.t. } B(\ell) = B(i), \forall r > B(i).
        \end{equation}
        \Cref{eq:intro-s2-t} only requires $k \cdot b^2$ many clauses, since there are at most $k$ choices for $r$, and $b^2$ for $i, \ell$.
        We thus cover all $s$-edges using a total of $2n^2 + b^2  k^3 + 7 b^2  k$ clauses.
        

       \item (\textbf{$f$-edges})  This case is fully symmetrical to the $s$-edges, this time using variables $f_{\ell, j}$ that represent the presence of some interval $[i, j]$ in the independent set such that $B(i) = \ell$.
       We add the symmetrical clauses, e.g., the symmetrical of~\Cref{eq:intro-s} is
    \[
       \overline{f_{\ell_1, j_1}} \lor \overline{f_{\ell_2, j_2}}, \quad \forall 1 \leq j_1, j_2 \leq n, \text{ s.t. } B(j_1) = B(j_2), \forall \ell_1 < B(j_1), \ell_2 < B(j_1) \text{ with } \ell_1 \neq \ell_2.    
       \]
       and similarly with the symmetrical of~\Cref{eq:intro-s2-t}. A minor saving is that we do not need to pay any extra clauses for having the variables $t^d_{i}$, 
       and therefore the combination of this case with the $s$-edges incurs in a total of $4n^2 + 2b^2 k^3 + 8b^2 k$ clauses. 
        \item (\textbf{$m$-edges}) In this case we are dealing with intervals $[i_1, j_1]$ and $[i_2, j_2]$ such that $B(j_1) = B(i_2)$, which we call $d := B(j_1)$, and the other two endpoints not being in the block $d$. We can cover these by using both our $s$ and $f$ variables.
       Indeed, it suffices to add clauses 
       \begin{equation}
        \overline{f_{\ell, j_1}} \lor \overline{s_{i_2, r}}, \quad \forall 1 \leq i_2 \leq j_1 \leq n \text{ s.t. } B(j_1) = B(i_2), \forall \ell < B(j_1), \forall r > B(i_2),
       \end{equation}
       which amounts to at most $b^2 k^3$ clauses, since we have to choose $\ell, r, B(j_1)$, for which there are $k^3$ options, and conditioned on $B(j_1)$ there are at most $b^2$ choices for $j_1, i_2$.
    \end{itemize}

    Adding the total number of clauses over all types, we get a total of
    \begin{align*}
        k(104b^2 \lg b + 104) + \binom{n}{2} + \binom{k}{2} + 26k^2 \lg k + 4n^2 + 2b^2k^3 + 8b^2k + b^2k^3 
    \end{align*}
    clauses. Using $n = kb$, this is at most
    \(
        104nb \lg b  + 104k + 4.5n^2 + 0.5k^2 + 26k^2 \lg k + 3n^2k + 8nb,
    \)
    and then taking $k = \lfloor\lg n\rfloor$ and $b = n/\lfloor\lg n\rfloor$, this is at most
    \[
        3n^2\lg n + n^2\left(108.5 + \frac{\lg n + 8.5\lg^2(n) \lg\lg n}{n^2} + \frac{8}{\lg n}\right) \leq 3n^2\lg n + 109.5n^2 \tag{since $n \geq 32$},  
    \]
    but as $n > 32$, we have $\lg n > 5$, and thus $109.5n^2 < 22n^2 \lg n$, from where $3n^2 \lg n + 109.5n^2 < 25n^2\lg n$, and thus we conclude our result.
\end{proof}

\subsection{Applications to Scheduling Problems}\label{subsec:applications}

We consider the \emph{non-preemptive schedulability} question treated by~\citet{mayankEfficientSATEncoding2020}, where there are $N$ tasks, the $i$-th of which has an integer duration $d_i$ and must be both started and finished in the interval $[r_i, e_i]$, with $1 \leq r_i \leq e_i \leq T$, and moreover, there are $M$ machines which can do tasks in parallel. They present several SAT encodings, all of which use $\Omega(N\!MT^2)$ clauses~\citep[Table 2]{mayankEfficientSATEncoding2020}. \Cref{thm:intervals_formal} allows us to do better:
\begin{theorem}[Informal]\label{thm:non_preemptive_scheduling}
    The non-preemptive schedulability problem can be encoded using $\O(N\!MT + M T^2 \lg T)$ clauses.
\end{theorem}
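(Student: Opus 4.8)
The plan is to reduce non-preemptive schedulability on $M$ machines to $M$ independent copies of the independent-set problem on an interval graph, glued together by a lightweight ``scheduling layer''. For every task $i$, every machine $m\in\{1,\dots,M\}$, and every admissible start time $t$ (those with $r_i\le t$ and $t+d_i-1\le e_i$) introduce a variable $x^m_{i,t}$ meaning ``task $i$ runs on machine $m$ and starts at time $t$''; there are $\O(NMT)$ of these. For each machine $m$ also introduce the base variables $y^m_{a,b}$ for $1\le a<b\le T+1$ of a private copy of the interval-graph encoding, where $y^m_{a,b}$ is read as ``on machine $m$ the time window occupying slots $a,\dots,b-1$ is used by some task''. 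The encoding then consists of four families of clauses: \textbf{(1)} for each task $i$, an \emph{exactly-one} constraint over $\{x^m_{i,t}\}_{m,t}$, using a linear-size encoding, for $\O(NMT)$ clauses in total; \textbf{(2)} for each pair $(m,t)$, an \emph{at-most-one} constraint over $\{x^m_{i,t}\}_i$ (linear-size again), totalling $\O(NMT)$; \textbf{(3)} the linking clauses $\overline{x^m_{i,t}}\lor y^m_{t,\,t+d_i}$ over all admissible $(i,m,t)$, totalling $\O(NMT)$; and \textbf{(4)} for each machine $m$, the formula encoding the independent-set property of $\mathcal{I}^0_{T+1}$ (the ``intersection of size $\ge 2$'' variant treated inside the proof of \Cref{thm:intervals_formal}) over the base variables $\{y^m_{a,b}\}$, costing $\le 26(T+1)^2\lg(T+1)=\O(T^2\lg T)$ clauses per machine and hence $\O(MT^2\lg T)$ overall. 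Summing the four families yields $\O(NMT+MT^2\lg T)$ clauses.

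For correctness, the key identification is that a task of duration $d$ starting at $t$ occupies slots $\{t,\dots,t+d-1\}$, and two such occupations conflict precisely when the associated discrete intervals $[t,t+d]$ and $[t',t'+d']$ intersect in at least two points, i.e.\ exactly when they are adjacent in $\mathcal{I}^0_{T+1}$. Given a legal schedule, set the $x^m_{i,t}$ according to it and set $y^m_{a,b}$ true iff slots $a,\dots,b-1$ are occupied on machine $m$; families (1)--(3) hold by construction, and on each machine the set of occupied intervals is pairwise non-adjacent in $\mathcal{I}^0_{T+1}$, hence an independent set, so by \Cref{def:isp_encoding} the copy in (4) is satisfiable. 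Conversely, from a satisfying assignment, family (1) gives a well-defined map task $\mapsto$ (machine, start time); on a fixed machine $m$, family (2) forbids two distinct tasks from sharing a start time, so any two tasks on $m$ have distinct start times and therefore, via (3), force two \emph{distinct} base variables $y^m_{\cdot,\cdot}$ to be true; if those two intervals were adjacent in $\mathcal{I}^0_{T+1}$, the corresponding copy in (4) would be unsatisfiable, so they are non-adjacent, i.e.\ the two tasks do not conflict. Hence the extracted schedule is legal, and the encoding is SAT iff the instance is schedulable.

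The single real subtlety — and the step I expect to be the main obstacle to state cleanly — is that the interval-graph ISP encoding by itself does \emph{not} detect two tasks with identical footprint on the same machine: they force a single base variable $y^m_{a,b}$ true, and a one-vertex set is always independent. Family (2) is exactly what closes this gap (two tasks starting at the same time on the same machine always conflict, so the at-most-one constraint is both sound and sufficient here). Everything else is routine bookkeeping: one must use at-most-one encodings with a linear number of clauses (e.g.\ Chen's product encoding) throughout so that families (1) and (2) stay within the $\O(NMT)$ budget rather than blowing up to $\O(NMT\cdot N)$; one must handle the degenerate case $d_i=1$ (which is why the footprint is mapped to $[t,t+d_i]$ inside $\mathcal{I}^0_{T+1}$, keeping intervals of length $\ge 2$, instead of to $[t,t+d_i-1]$ inside $\mathcal{I}_T$); and one must handle infeasible tasks with empty admissible range, which simply contribute an empty clause and make the formula correctly UNSAT. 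The remaining checks are the off-by-one endpoints in translating windows $\{t,\dots,t+d_i-1\}\subseteq\{1,\dots,T\}$ to vertices $[t,t+d_i]$ of $\mathcal{I}^0_{T+1}$, all of which are immediate.
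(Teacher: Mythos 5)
Your construction is essentially the paper's: the same $x$- and $y$-variables, the same linking clauses, one copy of the interval-graph ISP encoding per machine, and an at-most-one layer whose essential job is to catch two tasks with identical footprints on the same machine (the paper groups these AMO constraints by duration, you group by start time alone; both are sound since co-starting tasks on one machine always conflict, and both cost $\O(N\!MT)$ clauses). One small slip in your forward correctness argument: you should set $y^m_{a,b}$ true only when some task occupies \emph{exactly} the window $a,\dots,b-1$ (as your stated semantics intend), not whenever all those slots happen to be occupied --- otherwise sub-windows of a single task's footprint would also be set true, and these are pairwise adjacent in $\mathcal{I}^0_{T+1}$, which would falsify the per-machine ISP copy.
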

\begin{proof}
Create variables $x_{i, t, m}$ that represents that task $i$ is assigned to start in machine $m$ at time $t$, and auxiliary variables $y_{m, t_1, t_2}$ that represent that there is some task assigned to machine $m$ for exactly the interval $[t_1, t_2]$.
The semantics of the $y$-variables are encoded by clauses 
\[
    \overline{x_{i, t, m}} \lor y_{m, t, t + d_i}, \quad \forall i \in [1, N], t \in [r_i, e_i], m \in [1, M],
\]
using $\O(N\!MT)$ clauses. We partition the tasks according to their duration, with $D_t = \{ i : d_i = t\}$, so that we can enforce 
$\amope(\{x_{i, t', m} : i \in D_t\})$, for each $t' \in [1, T], t \in [1, T - t']$ and $m \in [1, M]$, which uses 
\[
    \sum_{t' = 1}^T \sum_{t \in [1, T - t']} \O(|D_t|) \cdot M \leq M T \cdot \O\left(\sum_{t \in [1, T - t']} |D_t| \right) = \O(N\!M T)
\]
many clauses. 
Then, for each index $i$, we enforce that task $i$ is done at some point, in some machine, with $\O(N)$ clauses: 
\[
\bigvee_{\substack{m \in [1, M]\\
    t \in [r_i, e_i-d_i]}} x_{i, t, m}, \quad \forall i \in [1, N],
\]
We then use, independently for each $m$, the encoding of~\Cref{thm:intervals_formal} to encode that the variables $y_{m, t_1, t_2}$ assigned to true make for disjoint intervals. 
This results in $\O(N\!MT + M T^2 \lg T)$ clauses. Correctness follows from the facts (i) we explicitly enforce that each task is done at some point, in some machine, and respecting its time constraints, (ii) the $\amope$ constraints ensure that no two tasks are assigned to the same machine during the same time interval, and (iii) the \emph{disjoint intervals} encoding on the $y_{m, t_1, t_2}$ variables ensures that no machine is used for two overlapping time intervals.
 that by the $\amope$ constraints we are forbidding different  
\end{proof}
When $N$ and $M$ are $\Theta(T)$, \Cref{thm:non_preemptive_scheduling} results in $\O(T^3 \lg T)$ clauses, as opposed to the $\Omega(T^4)$ clauses of~\citet{mayankEfficientSATEncoding2020}.
It is worth noting that, as done for example by~\citet{maric}, an alternative option would be to add for each time $t$, and machine $m$, a constraint 
\[
    \amope(\{x_{i, t', m} : i \in [N], t' \in [t-d_i, t]\}),
\]
however $\{x_{i, t', m} : i \in [N], t' \in [t-d_i, t]\}$ is a set of size $\Omega(NT)$, and thus this would result in $\Omega(N\!MT^2)$ clauses. 
%

\section{Discussion and Future Work}\label{sec:conclusions}

We have proposed a theoretical framework that will hopefully be helpful in the quest for understanding the limits of CNF encodings. To do so, we considered ``covering encodings'', based on covering the graph of incompatible pairs of literals with either cliques or bicliques, leading to different encodings. We showed how both clique coverings and biclique coverings have different advantages, where clique coverings are more efficient in the complete graph but biclique coverings are more efficient in the worst-case (\Cref{thm:subquadratic_encoding}). This difference is essentially surmounted for random Erd\H{o}s-Renyi graphs (\Cref{prop:random_coverings}). 
Moreover, it is worth noting that clique coverings are also very efficient when the graph is ``very close to being complete'', meaning that every vertex has degree at least $n - \Theta(1)$, as in this case a nice result of~\citet{alonCoveringGraphsMinimum1986} gives a covering with $\O(\lg n)$ cliques and thus an encoding with $\O(n \lg n)$ clauses.
We have shown a modest lower bound in~\Cref{thm:lower_bound}, which only applies to encodings using constant-width clauses. Extending this to general encodings is an interesting direction for future work.
Even though our study here has been theoretical in nature, I have implemented clique-covering and biclique-covering algorithms. 
In particular, I tested the algorithm of~\citet{mubayiFindingBipartiteSubgraphs2010} for biclique coverings with the guarantee of~\Cref{thm:erdos_chung_spencer}, however, the algorithm was designed to prove its asymptotic quality and gives poor results for small $n$. Therefore, a natural next step is to design a more practically efficient algorithm to find good biclique coverings.
For clique coverings, I tested both using the greedy approach of selecting the maximum clique at a time (for which I used the \texttt{Cliquer} tool~\citep{aaltoCliquerHomepage}), and the specific clique-covering tool from~\citet{algCliqueCovering}. The resulting quality of the coverings was not too different, but the latter algorithm was orders of magnitude faster for formulas with $\approx \! 1000$ variables.
A more thorough experimental evaluation is in order, especially considering the several more modern maximum clique algorithms, with their different trade-offs~\citep{wuReviewAlgorithmsMaximum2015}.
In terms of related work to our covering ideas, besides its nearest neighbors being the works of~\citet{Rintanen} and~\citet{ignatievCardinalityEncodingsGraph2017}, we highlight that~\citet{juknaComputationalComplexityGraphs2013} provides an in-depth treatment of the relationship between biclique coverings and the complexity of formulas representing graphs, especially bipartite ones.
Rintanen seems to be the earliest occurrence of the clique/biclique covering idea, although it is worth noting that his work explicitly states that his biclique representation still yields $\Omega(n^2)$ clauses for graphs of $n$ vertices~\citep[Section 6]{Rintanen}.

We have also studied how our framework applies to the case of complete interval graphs which are useful for encoding planning and scheduling problems. 
In fact, the~\emph{primitive} of encoding disjoint intervals seems to be useful in diverse contexts: a prior version of our encoding, that uses $\O(n^{8/3}) = \O(n^{2.666\ldots})$ clauses, was successfully used to improve an $\O(n^4)$ encoding for \emph{Straight Line Programs} (SLPs) from~\citet{bannai_et_al:LIPIcs.ESA.2022.12}, where a slight variant of the disjoint-intervals property was the bottleneck, and the improved encoding led to a total of $\O(n^3)$ clauses, making a different constraint the bottleneck. The improved SLP encoding incorporating our ideas is currently under review~\citep{bannaiAndMe}. 
To obtain $\O(n^{8/3})$ clauses, the encoding is essentially the same as the one in~\Cref{thm:intervals_formal}, but except of proceding recursively for the $x$-edges and the $y$-edges, we encode those directly. 
This leads to $\O(k^2 b^4)$ for the $x$-edges, since for any pair of blocks $\ell, r$ (of which there are $k^2$), we forbid the $x$-edges between intervals $[i, j]$ and $[i', j']$ such that $B(i) = B(i') = \ell$ and $B(j) = B(j') = r$, and there are at most $b^4$ choices for $i, j, i', j'$ conditioned on their blocks being $\ell$ and $r$.
For the $y$-edges, this leads to $\O(k^4)$ clauses, since we need to forbid the $y$-edges between pairs of blocks that guarantee an interval overlap.
Therefore, the total number of clauses is 
\(
    \O(k^2 b^4 + k^4 + n^2 + k^3 b^2), 
\)
for which a simple calculus argument shows that the optimal choice is $k = n^{2/3}$ and $b = n^{1/3}$, leading to the number of clauses being 
\[
\O(n^{4/3} n^{4/3} + n^{8/3} + n^2 + n^{6/3} n^{2/3}) = \O(n^{8/3} + n^{8/3} + n^2 + n^{8/3}) = \O(n^{8/3}).
\]


The difference between the $\O(n^{8/3})$ encoding and the result in~\Cref{thm:intervals_formal} is the use of recursion. Indeed, the $\O(n^{8/3})$ encoding can be formulated as a carefully crafted biclique covering, making for another example of the difference between BVA-style encodings, which allow for covering auxiliary variables, and encodings that only cover the base variables.
Running BVA on top of the $\O(n^{8/3})$ encoding led to smaller encodings that either of them alone, reenforcing the idea that BVA can operate recursively on top of an initial covering (see~\Cref{tab:encoding-comparison2} in~\Cref{app:disjoint-intervals}). 
 Experimental results for the SLP problem of~\citet{bannai_et_al:LIPIcs.ESA.2022.12} are presented in~\Cref{tab:encoding-comparison} (\Cref{app:disjoint-intervals}).

 The recursion of blocks in~\Cref{thm:intervals_formal} has a nice interpretation for scheduling: if one were to schedule events on e.g., a calendar year, starting on day $d_1$ and ending on day $d_2$, it would be convenient to first catalogue the events based on their starting and ending month: \emph{anything starting in January and ending in May is incompatible with anything starting in March and ending in September.} Then, for more granularity, the same technique decomposes months into weeks, and so on. The concrete decomposition in~\Cref{thm:intervals_formal} used $\lg n$ blocks in the first recursive step, which would be $\lg 365 \approx 8.5$ blocks for a calendar year, on the order of magnitude of months. On the other hand, the decomposition for the $\O(n^{8/3})$ version, where only one level of recursion is used, takes $k = n^{2/3}$, which would be roughly $365^{2/3} \approx 51$ for a calendar year, so basically the number of weeks.
Studying the practical applicability of our encoding for scheduling problems is an interesting direction for future work. In general, it is not necessarily true that fewer clauses will lead to a reduced solving time~\citep{DBLP:journals/jsat/Bjork11,haberlandtEffectiveAuxiliaryVariables2023}. For example, together with Marijn Heule, we presented an $\O(n^2 k \lg k)$ encoding for the packing chromatic number of~$\mathbb{Z}^2$~\citep{subercaseauxPackingChromaticNumber2022}, which ended up being surpassed by a much more effective although asymptotically larger encoding~\cite{subercaseauxPackingChromaticNumber2023d}.

Finally, we note that our ideas can be readily applied to \emph{Integer Linear Programming} formulations, and hopefully to other forms of constraint programming too.

\bibliographystyle{plainnat}
\bibliography{references}

\appendix

\section{Proof of~\texorpdfstring{\Cref{prop:ipt_correctness}}{Proposition 19}}\label{sec:appendix}

\iptcorrectness*
\begin{proof}
    For the $(\implies)$ direction, we assume that $\tau \models \textsf{NIP}_n$, and the build the assignment $\theta : \textsf{var}(\textsf{IPT}_n|_\tau) \to \{\bot, \top\}$ as described in the statement of the proposition:
    \( \theta(z_{a, b}) = \top \) if and only if there is some $[i, j]$ such that $\tau(x_{i, j}) = \top$ and $[a, b] \subseteq [i, j]$.
    Then the clauses (1-4) can be easily checked to be satisfied by $\theta$.  
        For the $(\impliedby)$ direction, we assume a satisfying assignment $\theta$ for $\textsf{IPT}_n|_\tau$ and assume expecting a contradiction that $\tau \nvDash \textsf{NIP}_n$. Since $\textsf{IPT}_n|_\tau$ is satisfiable, and $\textsf{IPT}_n$ contained the clauses $\overline{t_\ell} \lor \bigvee_{[i, j] \supseteq \{\ell\}} x_{i, j}$ whose variables were assigned by $\tau$, the only possibility is that 
    \(
    \tau \nvDash (\overline{x_{i, j}} \lor t_\ell)
    \) for some $1 \leq i < j \leq n$ and $\ell \in [i, j]$.  
    Thus, we have that $\tau(x_{i, j}) = \top$ and $\tau(t_\ell) = \bot$.  
    But by the clauses of type (1), $\tau(x_{i, j}) = \top$ implies $\theta(z_{i, j}) = \top$, so it suffices to prove that $\theta(z_{i, j}) = \top$ contradicts $\tau(t_\ell) = \bot$. We do this by showing that $\theta(z_{i, j}) = \top$ implies $\tau(t_\ell) = \top$, for any $\ell \in [i, j]$. The proof is by induction over $d := j-i$. If $d = 1$,
    then the clauses of type (2) directly give us $\tau(t_\ell) = \top$, and if $d > 1$, the clauses of type (3) give us that $\theta(z_{i+1, j}) = \top$ and $\theta(z_{i, j-1}) = \top$, and as $\ell$ belongs to either $[i+1, j]$ or $[i, j-1]$, we conclude by the inductive hypothesis.

    Let us now show that no other $\theta$ works. Indeed, we first prove that $\tau(x_{i, j}) = \top$ implies that \( \theta(z_{a, b}) = \top \) for every $[a, b] \subseteq [i, j]$, by induction on $d := (a-i) + (j-b)$. If $d = 0$, then $a = i$ and $j = b$, so by the clause of type (1) we are done.
     Otherwise, we need to prove that the statement for $d$ implies the case for $d+1$ for any $d < j - i - 1$ (since $j-i-1$ is the maximum possible value of $d$).
     We assume $\tau(x_{i, j}) = \top$ and note that by the inductive hypothesis this implies $\theta(z_{a, b}) = \top$ for any $1 \leq a < b \leq n$ such that $(a-i) + (j-b) = d$. Then, any interval $[a', b'] \subseteq [i, j]$ with $(a'-i) + (j-b') = d+1$ must be of the form $[a+1, b]$ or $[a, b-1]$ with $(a-i) + (j-b) = d$, and as by the clauses of type (3) we have $\theta(z_{a+1, b}) = \top$ and $\theta(z_{a, b-1}) = \top$, we are done.
     On the other hand, assume that \( \theta(z_{a, b}) = \top \) for some pair $1 \leq a < b \leq n$, and let us show that $\tau(x_{i, j}) = \top$ for some $[i, j]\supseteq [a, b]$.
     This time the induction is over $d := n - (b-a)$, with the base case $d = 1$ implying that $a=1, b = n$, from where (4) directly yields $z_{1, n} \to x_{1, n}$ which proves the base case. For the inductive case,
     note that a clause of type (4) guarantees that either $\tau(x_{a, b}) = \top$, in which case we are done, or that
      either $\theta(z_{a-1, b}) = \top$ or $\theta(z_{a, b+1})$ hold. But $n - (b - (a-1)) = n - (b+1 - a) = d-1$, and thus by inductive hypothesis we have that $\tau(x_{i, j}) = \top$ for some $[i, j]$ such that $[i, j] \supseteq [a-1, b] \supset [a, b]$ or $[i, j] \supseteq [a, b+1] \supset [a, b]$.
\end{proof}

\section{Proof of~\texorpdfstring{\Cref{lemma:edge_cases}}{Lemma 20}}\label{app:code}

\edgecases*
\begin{proof}
First, observe that in all cases we are assuming $i_1 \leq i_2$, and thus as all cases except (2) explicitly require $i_2 \leq j_1$, they all imply the intersection of the intervals $[i_1, j_1]$ and $[i_2, j_2]$ is non-empty.
For case (2), note that $B(i_2) < B(j_1)$ implies $i_2 < j_1$, which together with $i_1 \leq i_2$ implies again that the intersection of the intervals $[i_1, j_1]$ and $[i_2, j_2]$ is non-empty.
Thus, all cases correspond to actual edges in the graph $\mathcal{I}_n$. 
To see exhaustiveness, we present in~\Cref{fig:decision-tree} a decision tree that cases on the relationships between the blocks.
\end{proof}
Furthermore, as a sanity check, we present in~Code~\ref{lst:check_edge_cases} a Python script that validates the correctness for some finite values.

\renewcommand{\lstlistingname}{Code}
\lstinputlisting[language=Python, caption={Code to validate~\Cref{lemma:edge_cases}.}, captionpos=t, label={lst:check_edge_cases}]{proof_check.py}

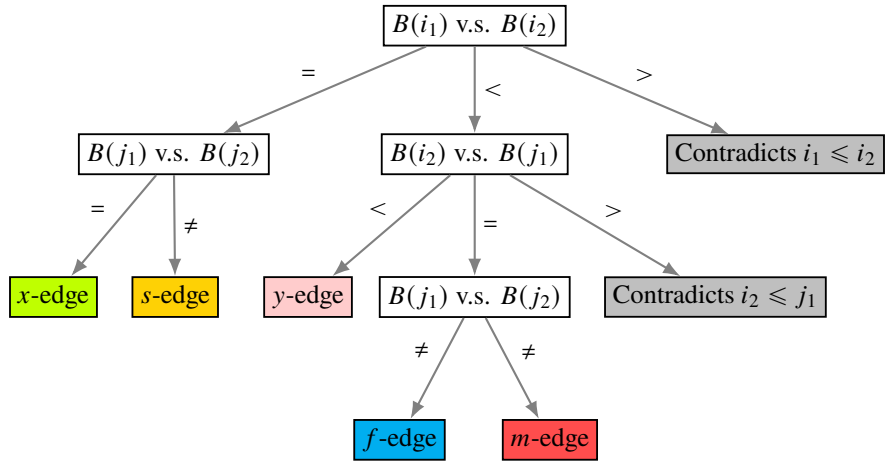
\begin{figure}
    \centering
\begin{tikzpicture}[
    xscale=0.8,
    yscale=0.95,
    level 1/.style={sibling distance=50mm, level distance=18mm},
    level 2/.style={sibling distance=40mm, level distance=20mm},
    level 3/.style={sibling distance=25mm, level distance=20mm},
    level 4/.style={sibling distance=15mm, level distance=20mm},
    level 5/.style={sibling distance=9mm, level distance=20mm},
    every node/.style={draw=black, line width=0.7,  inner sep=3pt, font=\small},
    edge from parent/.style={thick, draw=black!50!white, -{Latex}},
    edge label/.style={draw=none, fill=none, font=\footnotesize, auto},
    highlight path/.style={edge from parent/.append style={draw=blue, thick}}
]

\node[draw] {$B(i_1) \text{ v.s. } B(i_2)$} 
    child {
        node[draw] {$B(j_1) \text{ v.s. } B(j_2)$}
        child {
            node[draw, fill=lime] {$x$-edge}
            edge from parent node[ right, edge label, auto, xshift=-12pt, yshift=11pt] {$=$}
        }
        child {
            node[draw, fill=ACMYellow, xshift=-45pt] {$s$-edge}
            edge from parent node[ above, edge label] {$\neq$}
        }
        edge from parent node[below, edge label, xshift=-12pt, yshift=11pt] {${=}$}
    }
    child {
        node[draw] {$B(i_2) \text{ v.s. } B(j_1)$}
        child {
            node[draw, fill=red!20, xshift=28pt] {$y$-edge}
            edge from parent node[ above, edge label, xshift=-12pt, yshift=12pt] {$<$}
        }
        child {
            node[draw] {$B(j_1) \text{ v.s. } B(j_2)$}
            child {
                node[draw, fill=cyan] {$f$-edge}
                edge from parent node[ above, edge label, xshift=-12pt, yshift=13pt] {$\neq$}
            }
            child {
                node[draw, fill=red!70!white] {$m$-edge}
                edge from parent node[ above, edge label] {$\neq$}
            }
            edge from parent node[ above, edge label] {$=$}
        }
        child {
            node[draw, fill=lightgray] {Contradicts $i_2 \leq j_1$}
            edge from parent node[ below, edge label] {$>$}
        }
        edge from parent node[ above, edge label] {${<}$}
    }
    child {
        node[draw, fill=lightgray] {Contradicts $i_1 \leq i_2$}
        edge from parent node[below, edge label] {$>$}
    };

\end{tikzpicture}
\caption{A decision tree for the cases of~\Cref{lemma:edge_cases}. }\label{fig:decision-tree}
\end{figure}

\section{Application to String Compression with SLPs}\label{app:disjoint-intervals}

Note that in the problem of ~\citet{bannai_et_al:LIPIcs.ESA.2022.12}, the intervals can overlap if one is strictly contained in the other. This can be achieved by minor modifications in the indices of our constraints, without any new conceptual ideas.

We display some preliminary experimental results in~\Cref{tab:encoding-comparison}, where we show the usage of the $\O(n^{8/3})$ encoding for disjoint intervals in replacement of constraint 8 of the SLP encoding of ~\citet{bannai_et_al:LIPIcs.ESA.2022.12}. Our experiments are over families of string that make standard examples for string compression, and their descriptions can be found in the paper of ~\citet{bannai_et_al:LIPIcs.ESA.2022.12}, and the concrete strings are publicly available in their repository: \url{https://github.com/kg86/satcomp}.
Perhaps the most striking example is that of Fibonacci binary strings, defined recursively as $F_0 = 0$, $F_1 = 1$, and $F_n = F_{n-1} F_{n-2}$ (concatenation) for $n \geq 2$, where as shown in~\Cref{tab:encoding-comparison}, the improved encoding reduces the number of clauses from $118$ millions to $12$ millions (\texttt{fib12.txt}).
Generally speaking, the impact of the encoding for the disjoint (or strictly contained) intervals is not always as large as the asymptotics suggest, since the constraint in the encoding from \cite[Equation (8)]{bannai_et_al:LIPIcs.ESA.2022.12} are only applied based on a condition that depends on the repetitiveness of the input string. The impact of the encoding is thus maximal for strings of the form $a^n$, for some symbol $a$. These strings are interesting from a theoretical point of view, since the complexity of the smallest SLP (which is NP-hard to compute over arbitrary strings) is not known~\cite{171713}.

\begin{table}[htbp]
  \centering
  \caption{Comparison of encoding methods for the independent-set property of $\mathcal{I}_n$. We use the acronym `bd' for the $\O(n^{8/3})$ encoding, standing for ``block decomposition''.}
  \label{tab:encoding-comparison2}
  \begin{tabular}{crrrrrrrrr}
    \toprule
    & \multicolumn{2}{c}{naïve} & \multicolumn{2}{c}{bva} & \multicolumn{2}{c}{bd} & \multicolumn{2}{c}{bd+bva} \\
    \cmidrule(lr){2-3} \cmidrule(lr){4-5} \cmidrule(lr){6-7} \cmidrule(lr){8-9}
    $n$ & vars & clauses & vars & clauses & vars & clauses & vars & clauses \\
    \midrule
    10 & 55 & 330 & 82 & 143 & 130 & 196 & 132 & 194 \\
    20 & 210 & 5,985 & 402 & 862 & 398 & 984 & 430 & 801 \\
    30 & 465 & 31,465 & 1,015 & 2,499 & 850 & 2,714 & 952 & 1,893 \\
    40 & 820 & 101,270 & 1,935 & 4,996 & 1,315 & 5,774 & 1,687 & 3,095 \\
    \bottomrule
  \end{tabular}
\end{table}

\begin{table}
  \centering
  \caption{Comparison of encodings for smallest SLPs}
  \begin{tabular}{lrrrrrrr}
    \toprule
    \multirow{2}{*}{File} & \multirow{2}{*}{$|T|$} & \multicolumn{3}{c}{Base Encoding} & \multicolumn{3}{c}{Base + $\O(n^{8/3})$ disjoint intervals} \\
    \cmidrule(lr){3-5} \cmidrule(lr){6-8}
    & &  Time (s) & Clauses & Vars &  Time (s) & Clauses & Vars \\
    \midrule
    \multicolumn{8}{l}{} \\
    fib6.txt & 21 & \textbf{0.15} & 3,754 & 494 & 0.25 & 3,808 & 719 \\
    fib7.txt & 34 & \textbf{0.93} & 16,764 & 1,513 & 1.39 & 14,465 & 1,987 \\
    fib8.txt & 55 & \textbf{3.67} & 84,554 & 4,716 & 3.88 & 52,746 & 5,716 \\
    fib9.txt & 89 & \textbf{5.14} & 472,269 & 14,604 & 5.17 & 203,392 & 16,683 \\
    fib10.txt & 144 & 6.97 & 2,849,938 & 44,848 & \textbf{6.57} & 798,982 & 49,430  \\
    fib11.txt & 233 & 33.16 & 18,101,282 & 135,859 & \textbf{20.53} & 3,119,598 & 145,687  \\
    fib12.txt & 377 & 214.92 &  118,647,206 & 406,448 & \textbf{32.45} & 12,363,254 &  428,235\\

    \multicolumn{8}{l}{} \\
    thue\_morse5.txt & 32 & \textbf{0.40} & 9,562 & 948 & 0.70 & 11,199 & 1,398  \\
    thue\_morse6.txt & 64 & \textbf{3.48} & 73,015 & 4,339 & 4.83 & 71,349 & 5,563  \\
    thue\_morse7.txt & 128 & \textbf{8.15} & 656,099 & 19,933 & 8.18 & 532,394 & 23,740  \\
    thue\_morse8.txt & 256 & 12.60 & 6,977,803 & 90,647 & \textbf{11.61} & 3,825,669 & 102,260  \\
    \multicolumn{8}{l}{} \\
    period\_doubling5.txt & 32 & \textbf{0.49} & 11,682 & 1,135 & 0.79 & 11,962 & 1,585  \\
    period\_doubling6.txt & 64 & \textbf{4.36} & 109,928 & 5,586 & 4.80 & 75,780 & 6,810  \\
    period\_doubling7.txt & 128 & 7.60 & 1,281,958 & 27,219 & \textbf{7.31} & 557,642 & 31,026  \\
    period\_doubling8.txt & 256 & 28.23 & 17,372,984 & 130,050 & \textbf{12.41} & 3,951,407 & 141,663  \\
    \multicolumn{8}{l}{} \\
    paperfold4.txt & 32 & \textbf{0.23} & 9,078 & 880 & 0.55 & 10,869 & 1,330  \\
    paperfold5.txt & 64 & \textbf{1.66} & 67,650 & 3,924 & 3.71 & 69,736 & 5,148  \\
    paperfold6.txt & 128 & \textbf{6.14} & 598,774 & 17,613 & 8.22 & 523,842 & 21,420  \\
    paperfold7.txt & 256 & 14.17 & 6,280,334 & 78,705 & \textbf{12.92} & 3,782,468 & 90,318  \\
    \bottomrule
  \end{tabular}
  \label{tab:encoding-comparison}
\end{table}

\end{document}